\newtheorem{proposition}{Proposition}
\newtheorem{defin}{\bf Definition}
\newenvironment{proof}{\noindent{\bf Proof}}{$\diamond$}
\def\ga{\mbox{Ga}}
\def\be{\mbox{Be}}
\def\ber{\mbox{Ber}}
\def\bin{\mbox{Bin}}
\def\mul{\mbox{Mul}}
\def\nbi{\mbox{NB}}
\def\no{\mbox{N}}
\def\po{\mbox{Po}}
\def\pg{\mbox{Pg}}
\def\un{\mbox{Un}}
\def\E{\mbox{E}}
\def\V{\mbox{Var}}
\def\Cov{\mbox{Cov}}
\def\Cor{\mbox{Corr}}
\def\rest{\mbox{rest}}
\def\bw{{\bf w}}
\def\bx{{\bf x}}
\def\by{{\bf y}}
\def\bW{{\bf W}}
\def\bX{{\bf X}}
\def\bY{{\bf Y}}
\def\simind{\stackrel{\mbox{\scriptsize{ind}}}{\sim}}
\def\simiid{\stackrel{\mbox{\scriptsize{iid}}}{\sim}}
\newcommand{\balpha}{\boldsymbol{\alpha}}
\newcommand{\btheta}{\boldsymbol{\theta}}
\newcommand{\NN}{\mathbb{N}}
\newcommand{\SC}{\mathbb{S}}
\newcommand{\TT}{\mathbb{T}}
\newcommand{\XX}{\mathcal{X}}
\begin{document}

\baselineskip=24pt

\title{\bf Dependence on a collection of Poisson random variables}
\author{{\sc Luis E. Nieto-Barajas} \\[2mm]
{\sl Department of Statistics, ITAM, Mexico} \\[2mm]
{\small {\tt lnieto@itam.mx}} \\}
\date{}
\maketitle

\begin{abstract}
We propose two novel ways of introducing dependence among Poisson counts through the use of latent variables in a three levels hierarchical model. Marginal distributions of the random variables of interest are Poisson with strict stationarity as special case. Order--$p$ dependence is described in detail for a temporal sequence of random variables, however spatial or spatio-temporal dependencies are also possible. A full Bayesian inference of the models is described and performance of the models is illustrated with a numerical analysis of maternal mortality in Mexico. Extensions to cope with overdispersion are also discussed.
\end{abstract}

\vspace{0.2in} \noindent {\sl Keywords}: Autoregressive process, integer-valued time series, latent variables, moving average process, stationary process.

\section{Introduction}
\label{sec:intro}

Time series models are mainly discrete time stationary processes. The support of the random variables involved is usually continuous and unbounded \citep[e.g.][]{box&jenkins:70}. The study of discrete time stationary processes with discrete marginal distributions is less common, however there have been some proposals \citep[e.g.][]{mckenzie:85}. 

In this article we define discrete time stochastic processes with Poisson marginal distributions. Construction of our proposal is based on the use of latent variables, through hierarchical models, which allows us to define different orders of dependence in space and time. Overdispersion is possible to handle by a straightforward generalisation, defining negative binomial marginal distributions. 

Before we proceed we introduce some notation: $\ber(\alpha)$ denotes a Bernoulli distribution with success probability $\alpha$; $\bin(n,\alpha)$ denotes a binomial distribution with $n$ Bernoulli trials and success probability $\alpha$; $\po(\mu)$ denotes a Poisson distribution with mean (rate) $\mu$; $\no(\mu,\tau)$ denotes a normal distribution with mean $\mu$ and precision $\tau$; $\mul(n,\balpha)$ denotes a multinomial distribution with $n$ number of trials and vector of probabilities $\balpha$. In general, we will add an argument upfront to denote the corresponding density, e.g. $\ber(x\mid\alpha)$ denotes a Bernoulli density evaluated at $x$. 

The main building block of our proposal is: 
\begin{align}
\nonumber
X\sim\po(\mu)\mbox{ and }&Y\mid X=x\sim\bin(x,\alpha)\\ 
\label{eq:main} 
&\iff \\ 
\nonumber
Y\sim\po(\mu\alpha)\mbox{ and }&X-y\mid Y=y\sim\po(\mu(1-\alpha)). 
\end{align}
This result \eqref{eq:main} is straightforward to prove by using probability calculus. 

One of the first proposals in the literature is the integer-valued first order autoregressive process, INAR(1), which for a process $\{X_t\}$ is defined as \citep{mckenzie:85,alosh&alzaid:87}
\begin{equation}
\label{eq:inar1}
X_t=\alpha\circ X_{t-1}+\epsilon_t,
\end{equation}
where ``$\circ$'' denotes the binomial thinning operator defined as $\alpha\circ X=\sum_{j=1}^X B_j$ with $B_j\simiid\ber(\alpha)$. In other words $\alpha\circ X\mid X=x\sim\bin(x,\alpha)$. If we denote $Y_{t}\equiv\alpha\circ X_{t-1}$ in \eqref{eq:inar1}, then $Y_{t}\mid X_{t-1}=x_{t-1}\sim\bin(x_{t-1},\alpha)$. Moreover, if the innovations  are Poisson distributed, $\epsilon_{t}\sim\po(\mu(1-\alpha))$, then $X_t-y_t\mid Y_t=y_t\sim\po(\mu(1-\alpha))$. Thus if $X_{t-1}\sim\po(\mu)$, result \eqref{eq:main} implies that marginally $X_t\sim\po(\mu)$. In summary, if $X_0\sim\po(\mu)$ and $\{\epsilon_t\}$ is a sequence of i.i.d. $\po(\mu(1-\alpha))$, then $\{X_t\}$ is a stationary process with $\po(\mu)$ marginal distributions. The autocorrelation function of \eqref{eq:inar1} can be obtained analytically and has the form $\Cor(X_t,X_{t+s})=\rho(s)=\alpha^s$ for $s\geq 0$ \citep{mckenzie:85}. 

Later, \cite{mckenzie:88} generalized the  INAR(1) process to the ARMA type. For instance, the Poisson MA$(q)$ process is defined as 
\begin{equation}
\label{eq:maq}
X_t=Z_t+\beta_1\circ Z_{t-1}+\cdots+\beta_q\circ Z_{t-q},
\end{equation}
where $\beta_i\in(0,1)$ for all $i$, and $Z_t\simind\po(\mu/\beta)$ with $\beta=\sum_{i=0}^q\beta_i$ and $\beta_0=1$. Denoting by $Y_{i}=\beta_i\circ Z_{t-i}$ then $Y_i\mid Z_{t-i}=z_{t-i}\sim\bin(z_{t-i},\beta_i)$, and from \eqref{eq:main}, $Y_i\simind\po(\mu\beta_i/\beta)$ marginally. Now, using the additive property of independent Poisson random variables, it becomes that $X_t\sim\po(\mu)$. The autocorrelation function of \eqref{eq:maq} is given by $\rho(s)=\sum_{i=0}^{q-s}\beta_i\beta_{i+s}/\sum_{i=0}^q\beta_i$ for $s\leq q$, and zero otherwise. 

Another generalization of INAR(1) process is that of \cite{alzaid&alosh:90}, who proposed the INAR(p) process as follows 
\begin{equation}
\label{eq:inarp}
X_t=\sum_{i=1}^p \alpha_i\circ X_{t-i}+\epsilon_t,
\end{equation}
where $\alpha_i>0$ for all $i$ with $\sum_{i=1}^p\alpha_i<1$, and the conditional distribution of the vector $(\alpha_1\circ X_t,\alpha_2\circ X_t,\ldots,\alpha_p\circ X_t)\mid X_t=x_t\sim\mul(x_t,\balpha)$ with $\balpha=(\alpha_1,\alpha_2,\ldots,\alpha_p)$. Even if the distribution for the innovations $\epsilon_t$ in \eqref{eq:inarp} is Poisson, the marginal distribution of $X_t$ is not Poisson. 

More recent approaches considered generalized linear models, where $X_t$ is assumed Poisson distributed with mean $\mu_t$. The intensity $\mu_t$ (or $\log\mu_t$) is further regressed on lagged values $X_{t-i}$ (or $\log X_{t-i}$) and $\mu_{t-i}$ (or $\log\mu_{t-i}$), for positive integer $i$, producing what are called integer GARCH models \citep{fokianos&kedem:04,fokianos&al:09}. \cite{chen&lee:16} also work with generalized Poisson autoregressive models but with a switching mechanism and with zero-inflation, and \cite{chen&lee:17} further propose a causality test for the same type of models. As proved by \cite{ferland&al:06} these models are second order stationary under some conditions, but the marginal distribution is not Poisson. This is not a problem, but sometimes a desirable feature for strict stationarity in time series analysis. A summary of the state of the art integer-valued models can be found in \cite{davis&al:16}.

The description of the rest of the paper is as follows: In Section \ref{sec:model} we describe the construction of two dependent Poisson sequences in time and characterise its marginal distribution and correlation induced. Bayesian inference of model parameters is described in Section \ref{sec:inference}. Section \ref{sec:numerical} reports a numerical study of integer-valued time series of maternal mortality in Mexico. Section \ref{sec:extensions} presents some extensions to more general dependencies, as seasonal, periodic and spatial. We conclude with some remarks in Section \ref{sec:conclude}, where we also discuss the generalisation to negative binomial marginal distributions.

\section{Temporal dependence}
\label{sec:model}

Let $\{X_t\}_{t\in\NN}$ be a stochastic process indexed by $t\in\NN$, where $\NN=\{1,2,\ldots\}$ denotes the set of natural numbers. For each $t$ we require a set of two latent variables, say $(Y_t,W_t)$, and define a three level hierarchical model to induce a temporal dependence of order $p\geq 0$. Additionally, $Y_t$ and $W_t$ will exist for $t\in\TT$ with $\TT=\{1-p,-p,-p+1\ldots\}$. Let $\bY=\{Y_t\}_{t\in\TT}$ and $\bW=\{W_t\}_{t\in\TT}$. We propose two ways of defining dependence among the $X_t$'s by either, linking the variables of the second level with those of the third level across times (type A), or linking the variables of the first level to those of the second level across times (type B). Figure \ref{fig:graph} illustrates these two types, where the dependence shown is of order $p=1$. 

\begin{figure}[ht]
\setlength{\unitlength}{0.8cm}
\begin{center}
$(A)$ \hspace{7.5cm} $(B)$
\begin{picture}(20,8)
\put(1.3,7.0){$W_0$} 
\put(2.8,7.0){$W_1$} 
\put(4.3,7.0){$W_2$} 
\put(5.8,7.0){$W_3$} 
\put(7.3,7.0){$W_4$} 
\put(1.5,6.7){\vector(0,-1){2}}
\put(3.0,6.7){\vector(0,-1){2}}
\put(4.5,6.7){\vector(0,-1){2}}
\put(6.0,6.7){\vector(0,-1){2}}
\put(7.5,6.7){\vector(0,-1){2}}
\put(1.3,4.0){$Y_0$} 
\put(2.8,4.0){$Y_1$} 
\put(4.3,4.0){$Y_2$} 
\put(5.8,4.0){$Y_3$} 
\put(7.3,4.0){$Y_4$} 
\put(3.0,3.7){\vector(0,-1){2}}
\put(4.5,3.7){\vector(0,-1){2}}
\put(6.0,3.7){\vector(0,-1){2}}
\put(7.5,3.7){\vector(0,-1){2}}
\put(2.8,1.0){$X_1$} 
\put(4.3,1.0){$X_2$} 
\put(5.8,1.0){$X_3$} 
\put(7.3,1.0){$X_4$} 
\put(1.5,3.7){\vector(2,-3){1.35}}
\put(3.0,3.7){\vector(2,-3){1.35}}
\put(4.5,3.7){\vector(2,-3){1.35}}
\put(6.0,3.7){\vector(2,-3){1.35}}
\put(11.7,7.0){$W_0$} 
\put(13.2,7.0){$W_1$} 
\put(14.7,7.0){$W_2$} 
\put(16.2,7.0){$W_3$} 
\put(17.7,7.0){$W_4$} 
\put(13.4,6.7){\vector(0,-1){2}}
\put(14.9,6.7){\vector(0,-1){2}}
\put(16.4,6.7){\vector(0,-1){2}}
\put(17.9,6.7){\vector(0,-1){2}}
\put(13.2,4.0){$Y_1$} 
\put(14.7,4.0){$Y_2$} 
\put(16.2,4.0){$Y_3$} 
\put(17.7,4.0){$Y_4$} 
\put(13.4,3.7){\vector(0,-1){2}}
\put(14.9,3.7){\vector(0,-1){2}}
\put(16.4,3.7){\vector(0,-1){2}}
\put(17.9,3.7){\vector(0,-1){2}}
\put(13.2,1.0){$X_1$} 
\put(14.7,1.0){$X_2$} 
\put(16.2,1.0){$X_3$} 
\put(17.7,1.0){$X_4$} 
\put(11.9,6.7){\vector(2,-3){1.35}}
\put(13.4,6.7){\vector(2,-3){1.35}}
\put(14.9,6.7){\vector(2,-3){1.35}}
\put(16.4,6.7){\vector(2,-3){1.35}}
\end{picture}
\end{center}
\vspace{-1cm}
\caption{Graphical representation of temporal dependence of order $p=1$. Type A (left), type B (right).}
\label{fig:graph} 
\end{figure}
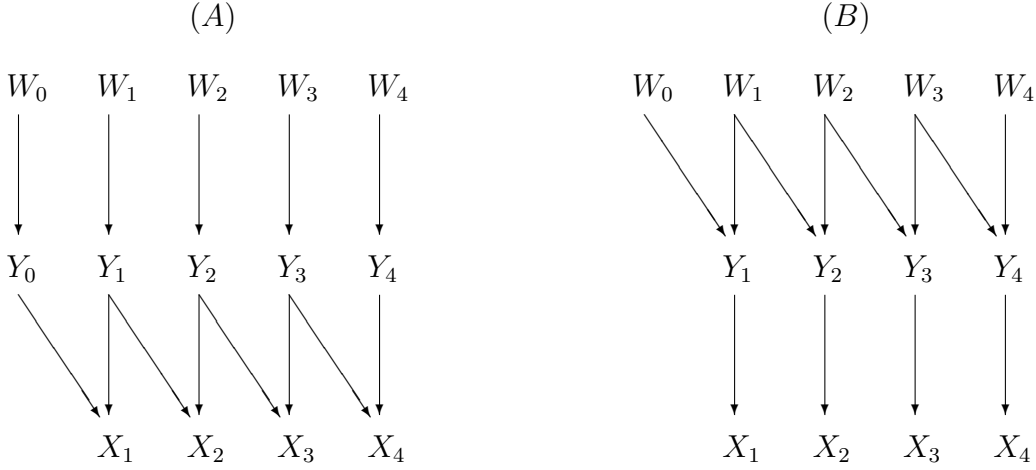

In general, the $W_t$'s will be independent Poisson random variables and the $Y_t$'s will be a binomial thinning of the $W_t$'s. 

\subsection{Type A dependence}
\label{ssec:typeA}

The stochastic process $\{X_t\}_{t\in\NN}$ is defined through the latent processes $\{Y_t\}_{t\in\TT}$ and $\{W_t\}_{t\in\TT}$, whose hierarchical representation is given by 
\begin{align}
\nonumber
W_t&\simiid\po(\mu), \\
\label{eq:typeA}
Y_t\mid W_t=w_t&\simind\bin(w_t,\alpha_t), \\
\nonumber
X_t-\sum_{i=0}^p y_{t-i}\mid \bY=\by&\simind\po\left(\mu\left(1-\sum_{i=0}^p\alpha_{t-i}\right)\right),
\end{align}
where $\mu>0$, $\alpha_t>0$ and $\sum_{i=0}^p\alpha_{t-i}<1$, for $t\in\NN$. 

Properties of the process $\{X_t\}_{t\in\NN}$, defined by type A construction \eqref{eq:typeA}, are given in Proposition \ref{prop:corA}. In particular, the marginal distribution and the autocorrelation function can be computed in closed form. 

\begin{proposition}
\label{prop:corA}
Let $\{X_t\}_{t\in\NN}$ be a stochastic process defined by equations \eqref{eq:typeA}. Then the marginal distribution of $X_t$ is $\po(\mu)$ for all $t\in\NN$, and the autocorrelation between $X_t$ and $X_{t+s}$ is given by
$$\Cor(X_t,X_{t+s})=\sum_{i=0}^{p-s}\alpha_{t-i},$$
for $1\leq s\leq p$ and zero for $s>p$. 
\end{proposition}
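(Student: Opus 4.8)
The plan is to reduce both claims to the Poisson additivity and thinning properties already packaged in \eqref{eq:main}, so that everything follows from elementary covariance bookkeeping for sums of independent Poisson variables.

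First I would pin down the marginal law of each latent $Y_t$. Since $W_t\simiid\po(\mu)$ and $Y_t\mid W_t\sim\bin(w_t,\alpha_t)$, the forward direction of \eqref{eq:main} gives $Y_t\sim\po(\mu\alpha_t)$ marginally, and because distinct $Y_t$ depend only on the corresponding independent $W_t$, the family $\{Y_t\}$ is independent across $t$. Writing $\epsilon_t=X_t-\sum_{i=0}^p Y_{t-i}$, the third level of \eqref{eq:typeA} says that, conditional on $\bY$, the $\epsilon_t$ are independent $\po(\mu(1-\sum_{i=0}^p\alpha_{t-i}))$ with a rate that does not depend on the realized values of $\bY$; hence $\epsilon_t$ is in fact independent of $\bY$ and of the other $\epsilon$'s. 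Viewing $X_t=\sum_{i=0}^p Y_{t-i}+\epsilon_t$ as a sum of mutually independent Poisson variables and applying Poisson additivity, the total rate is $\mu\sum_{i=0}^p\alpha_{t-i}+\mu(1-\sum_{i=0}^p\alpha_{t-i})=\mu$, so $X_t\sim\po(\mu)$, and in particular $\V(X_t)=\mu$ for every $t$.

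For the autocorrelation I would compute $\Cov(X_t,X_{t+s})$. The $\epsilon$ terms, being independent of everything, drop out, leaving $\Cov(\sum_{i=0}^p Y_{t-i},\sum_{j=0}^p Y_{t+s-j})$. By independence of the $Y$'s, only matching indices survive: $Y_{t-i}$ and $Y_{t+s-j}$ coincide exactly when $j=i+s$, which is admissible for $0\leq i\leq p-s$ when $s\geq 0$. Each surviving term contributes $\V(Y_{t-i})=\mu\alpha_{t-i}$, so $\Cov(X_t,X_{t+s})=\mu\sum_{i=0}^{p-s}\alpha_{t-i}$ for $1\leq s\leq p$ and $0$ for $s>p$. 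Dividing by $\sqrt{\V(X_t)\V(X_{t+s})}=\mu$ yields the stated correlation.

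The computations are all routine; the only places to be careful are the index bookkeeping in the covariance and the independence of the innovation. The thinning windows $\{t-p,\ldots,t\}$ and $\{t+s-p,\ldots,t+s\}$ overlap in exactly the $p-s+1$ time points $\{t+s-p,\ldots,t\}$, which is what produces the $p-s+1$ surviving terms. And the sum-of-independent-Poissons argument is legitimate precisely because the conditional rate of $\epsilon_t$ is constant in $\bY$, making $\epsilon_t$ genuinely independent of the latent thinnings rather than merely conditionally independent.
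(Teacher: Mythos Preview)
Your proof is correct and follows essentially the same route as the paper: marginalise to independent $Y_t\sim\po(\mu\alpha_t)$, recognise $X_t$ as a sum of independent Poissons to get the $\po(\mu)$ marginal, and then compute the covariance by isolating the shared $Y$ terms. The only cosmetic difference is that the paper phrases the covariance step via the iterative (law of total) covariance formula to arrive at $\Cov(X_t,X_{t+s})=\V\bigl(\sum_{i=0}^{p-s}Y_{t-i}\bigr)$, whereas you expand directly by bilinearity; the bookkeeping and result are identical.
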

\begin{proof}
We note that the first level can be marginalised to keep only levels two and three. Since $W_t\sim\po(\mu)$ and $Y_t\mid W_t=w_t\sim\bin(w_t,\alpha_t)$, using \eqref{eq:main} we get $Y_t\sim\po(\mu\alpha_t)$ marginally and they are all independent across $t$. Then, since the sum of independent Poisson random variables is again Poisson, $\sum_{i=0}^p Y_{t-i}\sim\po\left(\mu\sum_{i=0}^p\alpha_{t-i}\right)$. Finally, considering level three of \eqref{eq:typeA} and using \eqref{eq:main}, we obtain $X_t\sim\po(\mu)$ marginally for $t\in\NN$. To obtain the correlation we rely on conditional independence properties and the iterative covariance formula. Then $\Cov(X_t,X_{t+s})=\E\{\Cov(X_t,X_{t+s}\mid\bY)\}+\Cov\{\E(X_t\mid\bY),\E(X_{t+s}\mid\bY)\}$. The first term in the sum becomes zero since $X_t$'s are conditional independent given $\bY$. The second term, after removing the constants of the expected values, is rewritten as $\Cov\left(\sum_{i=0}^pY_{t-i},\sum_{i=0}^pY_{t+s-i}\right)$. Since $Y_t$'s are independent, this covariance reduces to the variance of the common elements, that is, $\V\left(\sum_{i=0}^{p-s}Y_{t-i}\right)$. Again, since the random variable inside this variance is Poisson, we get that $\Cov(X_t,X_{t+s})=\mu\sum_{i=0}^{p-s}\alpha_{t-i}$. Finally, since $X_t$ and $X_{t+s}$ are $\po(\mu)$ marginally, the product of their standard deviations is $\mu$, so we obtain the result. 
\end{proof}

The autocorrelation expression of $\{X_t\}_{t\in\NN}$, given in Proposition \ref{prop:corA}, is a function of the sum of the thinning probabilities $\alpha_t$'s of the shared elements in the definition of $X_t$ and $X_{t+s}$. Additionally,  $\{X_t\}_{t\in\NN}$ becomes strictly stationary when $\alpha_t=\alpha$ for all $t$, and the autocorrelation induced reduces to $\Cor(X_t,X_{t+s})=(p-s+1)\alpha$. Moreover, if $p=0$, the $X_t$'s become independent. Alternatively, if $\alpha_t=0$ then $Y_t=0$ with probability one (w.p.1), so if $\alpha_t=0$ for all $t$, the $X_t$'s become also independent, regardless of the value of $p$. 

To see some similarities with previous proposals, we can re-write construction \eqref{eq:typeA} as 
\begin{equation}
\label{eq:typeA2}
X_t=\sum_{i=0}^pY_{t-i}+\epsilon_t=\sum_{i=0}^p \alpha_{t-i}\circ W_{t-i}+\epsilon_t,
\end{equation}
where $\epsilon_t\sim\po\left(\mu\left(1-\sum_{i=0}^p\alpha_{t-i}\right)\right)$. As such, \eqref{eq:typeA2} would resemble the Poisson MA$(q)$ given in \eqref{eq:maq} but with $p$ instead of $q$ and with an extra innovation term. However, the most important difference are the ``coefficients'' or thinning probabilities $\alpha_t$, which in our proposal they move along $t$, whereas in the MA$(q)$ they are fixed for any $t$.

\subsection{Type B dependence}
\label{ssec:typeB}

The stochastic process $\{X_t\}_{t\in\NN}$ is defined through the latent processes $\{Y_t\}_{t\in\TT}$ and $\{W_t\}_{t\in\TT}$, whose hierarchical representation is now given by 
\begin{align}
\nonumber
W_t&\simiid\po\left(\frac{\mu}{p+1}\right), \\
\label{eq:typeB}
Y_t\mid\bW=\bw&\simind\bin\left(\sum_{i=0}^p w_{t-i}\hspace{0.5mm},\alpha_t\right), \\
\nonumber
X_t-y_t\mid Y_t=y_t&\simind\po\left(\mu\left(1-\alpha_t\right)\right),
\end{align}
where $\mu>0$ and $\alpha_t\in(0,1)$, for $t\in\NN$. 

Properties of the process $\{X_t\}_{t\in\NN}$, defined by type B construction \eqref{eq:typeB}, are given in Proposition \ref{prop:corB}. As in type A construction, the marginal distribution and the autocorrelation function can be computed in closed form.
 
\begin{proposition}
\label{prop:corB}
Let $\{X_t\}_{t\in\NN}$ be a stochastic process defined by equations \eqref{eq:typeB}. Then the marginal distribution of $X_t$ is $\po(\mu)$ for all $t\in\NN$, and the autocorrelation between $X_t$ and $X_{t+s}$ is given by
$$\Cor(X_t,X_{t+s})=\alpha_t\alpha_{t+s}\left(\frac{p-s+1}{p+1}\right),$$
for $1\leq s\leq p$ and zero for $s>p$.
\end{proposition}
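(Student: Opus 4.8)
The plan is to mirror the strategy used for Proposition \ref{prop:corA}, but the key point is that here the dependence across times is carried entirely by the latent $W_t$'s that are \emph{shared} in the binomial totals, rather than by the $Y_t$'s directly. First I would collapse the top two levels into a single aggregated latent variable $V_t=\sum_{i=0}^p W_{t-i}$. Since the $W_t$ are iid $\po(\mu/(p+1))$, additivity of independent Poisson variables gives $V_t\sim\po(\mu)$. Conditionally, $Y_t\mid\bW$ depends on $\bW$ only through $V_t$, with $Y_t\mid V_t\sim\bin(v_t,\alpha_t)$, so the forward direction of \eqref{eq:main} yields $Y_t\sim\po(\mu\alpha_t)$ marginally. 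Combining this with the third level, $X_t-y_t\mid Y_t\sim\po(\mu(1-\alpha_t))$, the reverse direction of \eqref{eq:main} immediately gives $X_t\sim\po(\mu)$, and in particular $\V(X_t)=\mu$.

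For the autocorrelation, write $X_t=Y_t+\epsilon_t$ with $\epsilon_t\sim\po(\mu(1-\alpha_t))$ independent of the $Y$'s and of each other (their conditional law given $\bY$ does not depend on $\bY$). For $s\geq 1$ the innovation terms then contribute nothing, so $\Cov(X_t,X_{t+s})=\Cov(Y_t,Y_{t+s})$, and it remains only to compute the covariance of the two thinned aggregates. The natural tool is the law of total covariance, conditioning on $\bW$.

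The crucial structural point --- the one place where type B differs essentially from type A --- is that, given $\bW$, the variables $Y_t$ and $Y_{t+s}$ are independent binomial draws even though their totals $V_t$ and $V_{t+s}$ overlap; hence $\Cov(Y_t,Y_{t+s}\mid\bW)=0$ and the entire covariance comes from the second term, $\Cov(\E[Y_t\mid\bW],\E[Y_{t+s}\mid\bW])=\alpha_t\alpha_{t+s}\,\Cov(V_t,V_{t+s})$. I would then compute $\Cov(V_t,V_{t+s})$ by expanding both sums and using $\Cov(W_a,W_b)=\{\mu/(p+1)\}\mathbf{1}(a=b)$: the number of indices $i\in\{0,\ldots,p\}$ for which $W_{t-i}$ also appears in $V_{t+s}$ (i.e. $t-i=t+s-j$ for some $j\in\{0,\ldots,p\}$) is exactly $p+1-s$ when $1\leq s\leq p$, and zero when $s>p$ since the index windows $\{t-p,\ldots,t\}$ and $\{t+s-p,\ldots,t+s\}$ are then disjoint. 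This gives $\Cov(V_t,V_{t+s})=\mu(p+1-s)/(p+1)$, hence $\Cov(X_t,X_{t+s})=\alpha_t\alpha_{t+s}\,\mu(p+1-s)/(p+1)$, and dividing by $\V(X_t)=\mu$ produces the stated expression.

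The main obstacle is conceptual rather than computational: one must read the conditional-independence structure correctly so as not to double-count the shared $W$'s. By analogy with type A it is tempting to integrate out the $W$'s first and treat the $Y_t$'s as the building blocks, but here the $Y_t$'s are \emph{not} independent, and the clean counting argument works only at the level of the $W$'s before the thinning is marginalised. Keeping the conditioning on the full vector $\bW$ until after the total-covariance split is what makes the overlap count transparent.
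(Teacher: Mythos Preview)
Your proposal is correct and follows essentially the same route as the paper: additivity of independent Poissons to get the marginal of $V_t=\sum_{i=0}^p W_{t-i}$, the building block \eqref{eq:main} twice to obtain $X_t\sim\po(\mu)$, and then two applications of the law of total covariance (first reducing $\Cov(X_t,X_{t+s})$ to $\Cov(Y_t,Y_{t+s})$, then conditioning on $\bW$) so that everything collapses to the variance of the overlapping $W$'s. The paper records the intermediate result as $\alpha_t\alpha_{t+s}\,\V\!\left(\sum_{i=0}^{p-s}W_{t-i}\right)$, which is exactly your $\alpha_t\alpha_{t+s}\,\Cov(V_t,V_{t+s})$ after the overlap count.
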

\begin{proof}
Using the additive property of independent Poisson variables, we obtain that \linebreak $\sum_{i=0}^pW_{t-i}\sim\po(\mu)$. Now, from \eqref{eq:main} and the second equation in \eqref{eq:typeB}, the marginal distribution of the latent variables $Y_t$'s becomes $Y_t\sim\po(\mu\alpha_t)$. Finally, from \eqref{eq:main} and the third equation in \eqref{eq:typeB}, we obtain that $X_t\sim\po(\mu)$ marginally for $t\in\NN$. Now for the correlation, we use the iterative covariance formula and apply conditional independence properties twice. We start with $\Cov(X_t,X_{t+s})=\E\{\Cov(X_t,X_{t+s}\mid\bY)\}+\Cov\{\E(X_t\mid\bY),\E(X_{t+s}\mid\bY)\}$. The first term in the sum is zero due to conditional independence of the $X_t$'s given $\bY$. The second term, after removing the constants in the expected values, becomes $\Cov(Y_t,Y_{t+s})$. Applying iterative covariance formula again we get $\Cov(Y_t,Y_{t+s})=\E\{\Cov(Y_t,Y_{t+s}\mid\bW)\}+\Cov\{\E(Y_t\mid\bW),\E(Y_{t+s}\mid\bW)\}$. Again, the first term becomes zero due to conditional independence of the $Y_t$'s given $\bW$, and computing the expected values in the second term we obtain $\Cov\{\alpha_t\sum_{i=0}^{p}W_{t-i},\alpha_{t+s}\sum_{i=0}^{p}W_{t+s-i}\}$. The $W_t$'s are independent, so after taking out the constants, this covariance reduces to the variance of the common elements, that is, $\alpha_t\alpha_{t+s}\V\left(\sum_{i=0}^{p-s}W_{t-i}\right)$. Since the random variable inside this variance is again Poisson, we get that $\Cov(X_t,X_{t+s})=\mu\alpha_t\alpha_{t+s}(p-s+1)/(p+1)$. Finally, since $X_t$ and $X_{t+s}$ are $\po(\mu)$ marginally, the product of their standard deviations is $\mu$, so we obtain the result. 
\end{proof}

The autocorrelation expression of $\{X_t\}_{t\in\NN}$, given in Proposition \ref{prop:corB}, is a function of the thinning probabilities of times $t$ and $t+s$, and the number of shared elements in the definition of $Y_t$ and $Y_{t+s}$. Again, $\{X_t\}_{t\in\NN}$ becomes strictly stationary when $\alpha_t=\alpha$ for all $t$, and the autocorrelation induced reduces to $\Cor(X_t,X_{t+s})=\alpha^2(p-s+1)/(p+1)$. Moreover, if $p=0$, the $X_t$'s become independent. Alternatively, if $\alpha_t=0$ then $Y_t=0$ w.p.1, so if $\alpha_t=0$ for all $t$, the $X_t$'s become also independent, regardless of the value of $p$. 

We note that the marginal distribution of the latent $Y_t$'s variables, in both type A and type B constructions, are the same, $Y_t\sim\po(\mu\alpha_t)$. However in \eqref{eq:typeA} they are independent, whereas in \eqref{eq:typeB} they are dependent.

Re-writing model \eqref{eq:typeB} into an additive form we have
\begin{equation}
\label{eq:typeB2}
X_t=Y_t+\epsilon_t=\alpha_t\circ\sum_{i=0}^p W_{t-i}+\epsilon_t,
\end{equation}
where $\epsilon_t\sim\po(\mu(1-\alpha_t))$. Expression \eqref{eq:typeB2} looks like a MA(0) process with innovation term, or like an INAR(1) process where the thinning operates over the sum of latent variables $\sum_{i=0}^p W_{t-i}$ instead of over the lagged variable $X_{t-1}$. 

Comparing the two constructions A and B in their re-written expressions \eqref{eq:typeA2} and \eqref{eq:typeB2}, disregarding the innovations $\epsilon_t$, type A process is based on the sum of $p+1$ thinnings of $p+1$ different latent variables, whereas type B process is based on a single thinning of the sum of $p+1$ latent variables. Therefore, model A should be more flexible for modelling purposes. 

To have an idea of how the paths of the processes look like, we simulated from both, type A and B processes, with $\mu=2$, $\alpha_t=1/7$, for $t=1,\ldots,T$ and $T=100$. We took three values of $p\in\{1,3,5\}$ to illustrate. Figure \ref{fig:sims} contains the simulated paths for both processes. For type A process (top row) there is a clear difference in the paths when we change the value of $p$, for $p=1$ (left panel) the process path shows a fast oscillation around the mean $\mu=2$, whereas as we increase $p$ (middle and right panels) the process paths start to oscillate more slowly around the mean. On the other hand, for type B process (bottom row), there is practically no difference in the paths when we increase the value of $p$. This is a results of the constant $\alpha_t$ parameters and the dependence imposed by type B construction.

\section{Bayesian inference}
\label{sec:inference}

Let $\bX=\{X_t,t=1,\ldots,T\}$ be an observable finite time series of integer-valued random variables. We assume that the law describing the sequence is one of the previously defined type A or type B models. The idea is to make inference about the unknown parameters of the models $\btheta=(\balpha,\mu)$, where $\balpha=\{\alpha_t,t=1,\ldots,T\}$, and for that we follow a Bayesian approach. 

For type A model, the parameter space is $\Theta_A=\{(\alpha_1,\ldots,\alpha_t,\mu):\alpha_t>0,\sum_{i=0}^p\alpha_{t-i}<1,t=1,\ldots,T,\mu>0\}$, and for type B model, the parameter space is $\Theta_B=\{(\alpha_1,\ldots,\alpha_t,\mu):\alpha_t\in(0,1),t=1,\ldots,T,\mu>0\}$. Given the flexibility of the parametric beta and gamma families to accommodate any prior knowledge, we use the former distribution for the parameters $\alpha_t$, $t=1,\ldots,T$, and the latter distribution for $\mu$. 

In summary, the prior distribution for the parameters $\btheta$ in both models is $$f(\btheta)=\left\{\prod_{t=1}^T\be(\alpha_t\mid a_\alpha,b_\alpha)\right\}\ga(\mu\mid a_\mu,b_\mu)I(\btheta\in\Theta_C),$$
where $C\in\{A,B\}$ for each of the two types of models, respectively. Note that the prior distribution $f(\btheta)$ for type A construction does not define independence for each of its components, because the parameter space $\Theta_A$ imposes a dependence in the $\balpha$ parameters, whereas for type B construction, prior $f(\btheta)$ imposes independence in all its components. 

To define the likelihood, we recall that $\bY$ and $\bW$ are latent variables, therefore are not observable, so we treat them as missing data and define an augmented likelihood (e.g. Tanner, 1991). For type A model the joint distribution of $(\bX,\bY)$, after integrating $\bW$ out, has the form
$$f(\bx,\by\mid\btheta)=\prod_{t=1}^T\po\left(x_t-\sum_{i=0}^p y_{t-i}\left|\mu\left(1-\sum_{i=0}^p\alpha_{t-i}\right)\right.\right)\po(y_t\mid\mu\alpha_t);$$
and for type B model the joint distribution of $(\bX,\bY,\bW)$ has the form
$$f(\bx,\by,\bw\mid\btheta)=\prod_{t=1}^T \po\left(x_t-y_t\mid\mu(1-\alpha_t)\right)\bin\left(y_t\left|\sum_{i=0}^p w_{t-i},\alpha_t\right.\right)\po\left(w_t\left|\frac{\mu}{p+1}\right.\right).$$

Posterior distributions of $\btheta$ are simply proportional to the product of the augmented likelihoods by the prior. These will be characterised through their full conditional distributions, which have been included in the Appendix for both types of models. Distributions (i)--(iii) correspond to type A model, whereas distributions (iv)--(vii) correspond to type B model. Posterior inference is therefore obtained through the implementation of a Gibbs sampler \citep{smith&roberts:93} with some Metropolis-Hastings (MH) steps \citep{tierney:94}. Details are also given in the Appendix.

\section{Numerical analysis}
\label{sec:numerical}

Unfortunately maternal mortality is still an important public health problem in Mexico. According to the World Health Organization, maternal mortality is defined as a death from preventable causes related to pregnancy and childbirth. The Mexican National Institute of Geography and Statistics reports the annual number of maternal deaths for the 32 political states of Mexico (https://www.inegi.org.mx/sistemas/olap/proyectos/bd/continuas/mor\-ta\-li\-dad/mortalidadgeneral.asp). Information is available from 1990 until 2018, that is, a total of $T=29$ years. This dataset is provided as a supplementary material. 

Along available years, the states with the smallest number of deaths are, Baja California Sur and Colima, with an average of 3.4 and 3.5 deaths per year, respectively. On the opposite extreme, the states with the largest number of deaths are CDMX (Mexico City) and the State of Mexico, with an average of 147 and 139 deaths across the states. It is not surprising that the states with the smallest and largest number of deaths correspond to the least and the most populated states, respectively. On the other hand, across states,  2018 is the year with the smallest number of deaths, with and average of 28, and 1990 is the year with the largest number of deaths, with ad average of 46. This suggests an overall reduction in the number of deaths along years. 

We analysed the 32 time series with both types of models. To define the prior distributions we took $a_\alpha=b_\alpha=a_\mu=b_\mu=0.01$, which define vague priors (large variance) for $\alpha_t$, $t=1,\ldots,T$ and $\mu$. For $p$ we took a set of different values to compare, say $p\in\{0,1,2,3,4,5,6\}$. A Gibbs sampler was implemented in Fortran with 16,000 iterations, a burn-in period of 1,000 and kept one of every 5$th$ iteration, after burn-in, to produce posterior summaries. For each state the running time is less than 10 seconds. The tuning parameters for the MH steps were set to $\delta_\alpha=3$ and $\delta_w=10$ that provide acceptance probabilities between 20\% and 40\%, which according to \cite{robert&casella:10} are optimal. Convergence of the chains was assessed informally by looking at the trace plots, ergodic means and autocorrelation functions. Figure \ref{fig:mcmc} shows these convergence diagnostics for parameter $\mu$ in type A model for Coahuila state. 

To assess model fit we computed the L-measure which is a predictive statistic that summarises variance and mean square error (bias) of the posterior predictive distribution of each $X_t$. This is defined as \citep{ibrahim&laud:94}
\begin{equation}
\label{eq:lmeasure}
L(\nu)=\frac{1}{T}\sum_{t=1}^T\V\left(X_t^F\mid\bx\right)+\frac{\nu}{T}\sum_{t=1}^T\left\{\E\left(X_t^F\mid\bx\right)-x_t\right\}^2,
\end{equation}
where $X_t^F$ and $x_t$ denote the predictive and observed value of $X_t$, respectively.

Table \ref{tab:gof} reports the values of the L-measure with $\nu=1/2$, obtained when fitting models of types A and B to the 32 time series of the maternal mortality dataset, for $p=0,1,\ldots,6$. For each type of model, the value of $p$ with the smallest L-measure is highlighted in bold. Apart from Aguascalientes and Zacatecas (see Table \ref{tab:gof}), where the best fitting is achieved for $p=0$ (independence) in one of the two types of models, for the rest of the states the best fitting model is obtained for $p>0$, which implies a temporal dependence. Now, comparing the best fitting from the two types, for 31 of the 32 states, type A model outperforms type B model. The only state where type B model is slightly better is Colima with an L-measure of 2.81 as compared to 2.84 obtained by the best type A model. 

Figures \ref{fig:baja}, \ref{fig:coahuila} and \ref{fig:cdmx} show the performance of best fitting models for type A (left panel) and type B (right panel) for Baja California, Coahuila and CDMX (Mexico City), respectively. In these figures, type A model shows a better fitting than type B model, with more accurate predictions and narrower 95\% credible intervals. On the other hand, Figure \ref{fig:colima} displays model performance for the state of Colima, which is the only case where type B model slightly outperforms type A model. 

Finally, to place our two proposals in context, we fitted two commonly used models for integer valued time series: the INAR(1) model \eqref{eq:inar1}, with prior distributions $\alpha\sim\be(0.01,0.01)$ and $\mu\sim\ga(0.01,0.01)$ independently; and the INGARCH(1,1) model defined as $X_t\sim\po(\mu_t)$ and $\log(\mu_t)=\alpha+\beta_1\log(\mu_{t-1})+\beta_2\log(X_{t-1}+1)$, with prior distributions $\alpha\sim\no(0,0.01)$ and $\beta_j\sim\no(0,0.01)$, for $j=1,2$ independently. We also implemented Gibbs samplers with the same specifications as above and computed the L-measure \eqref{eq:lmeasure} with $\nu=1/2$. The corresponding goodness of fit statistics are included in the last two columns of Table \ref{tab:gof} for the 32 states of Mexico. Interestingly, for 26 of the 32 states our best fitting type A model outperforms the INAR(1) model, and for all states our best fitting type A model is better than the INGARCH(1,1) model. 

To compare the performance of our two processes with the two chosen competitors, we show in Figure \ref{fig:gua} the fittings for the state of Guanajuato. In each of these graphs we further include out of sample predictions for 3 years ahead. Future predictions with type A model (top left) are the only ones that follow the decreasing tendency of the data, whereas for type B model (top right) and INAR(1) model (bottom left) future predictions are slightly increasing, finally for INGARCH(1,1) model out of sample predictions remain fairly constant.

\section{Extensions}
\label{sec:extensions}

Considering Figure \ref{fig:graph}, we note that the processes $\{X_t\}_{t\in\NN}$ are still well defined if any of the diagonal arrows are removed. So in general, we can make $X_t$ to be defined in terms of $Y_{t-i}$, in type A construction, or $Y_t$ to be defined in terms of $W_{t-i}$, in type B construction, for any $i$ not necessarily consecutive. Therefore we can define more general seasonal \citep{nabeya:01} or periodic \citep{mcleod:94} dependent models.

Let $\{X_t\}_{t\in\NN}$ be a stochastic process with seasonality $s$, and let $\bY=\{Y_t\}_{t\in\TT_s}$ and $\bW=\{W_t\}_{t\in\TT_s}$ be two latent processes with $\TT_s=\{1-ps,,-ps,-ps+1,\ldots\}$. Then a seasonal dependent process of order $p$ would be defined by
$$X_t-\sum_{i=0}^p y_{t-si}\mid\bY=\by\simind\po\left(\mu\left(1-\sum_{i=0}^p \alpha_{t-si}\right)\right),$$
for a type A construction, with levels 1 and 2 as in \eqref{eq:typeA}, and with parameter constraint $\sum_{i=0}^p \alpha_{t-si}<1$ for $t\in\NN$; and
$$Y_t\mid\bW=\bw\simind\bin\left(\sum_{i=0}^p w_{t-si},\alpha_t\right),$$
for a type B construction, with levels 1 and 3 as in \eqref{eq:typeB}. In both types, an analogous proof to Propositions \ref{prop:corA} and \ref{prop:corB}, would show that $X_t\sim\po(\mu)$ marginally for $t\in\NN$. 

Now, if $\{X_t\}_{t\in\NN}$ is a process such that we can re-write the time index as $t=t(r,m)=(r-1)s+m$, for $r=1,2,\ldots$ and $m=1,\ldots,s$, we can define a periodic dependent process of orders $(p_1,\ldots,p_s)$. For instance, for monthly data, $s=12$ and $r$ and $m$ denote the year and month, respectively. Let $\bY=\{Y_t\}_{t\in\TT_s}$ and $\{W_t\}_{t\in\TT_s}$ be two latent processes with $\TT_s=\{t^*,t^*+1,\ldots\}$ and $t^*=\min\{t(r,m)-p_m:r=1,m=1,\ldots,s\}$. Then a periodic dependent process of orders $(p_1,\ldots,p_s)$ would be defined by
$$X_t-\sum_{i=0}^{p_m}y_{t(r,m)-i}\mid\bY=\by\simind\po\left(\mu\left(1-\sum_{i=0}^{p_m}\alpha_{t(r,m)-i}\right)\right)$$
for a type A construction with levels 1 and 2 as in \eqref{eq:typeA} and with $\sum_{i=0}^{p_m}\alpha_{t(r,m)-i}<1$ for $t=t(r,m)\in\NN$; and
$$Y_t\mid\bW=\bw\simind\bin\left(\sum_{i=0}^{p_m}w_{t(r,m)-i}\hspace{0.5mm},\alpha_t\right)$$
for a type B construction with levels 1 and 3 as in \eqref{eq:typeB}. It is not difficult to prove that for type A construction we obtain $X_t\sim\po(\mu)$ marginally for all $t\in\NN$, whereas for type B construction we obtain $X_t\sim\po\left(\mu\left\{1-\alpha_t+\alpha_t(p_m+1)/(p+1)\right\}\right)$ marginally for $t\in\NN$. 

Alternatively, both constructions can also be suitably defined for a spatial setting.
Let $\{X_t\}_{t\in\SC}$ be a stochastic process and assume that the index $t$ denotes spatial location in the set $\SC=\{1,\ldots,n\}$, and consider $\partial_t$ to be the set of neighbours of location $t$. Let $\bY=\{Y_t\}_{t\in\SC}$ and $\bW=\{W_t\}_{t\in\SC}$ be two latent processes. Then, a spatial dependent process $\{X_t\}_{t\in\SC}$ would be defined by
$$X_t-\sum_{i\in\partial_t} y_{i}\mid\bY=\by\simind\po\left(\mu\left(1-\sum_{i\in\partial_t} \alpha_{i}\right)\right),$$
for a type A construction with levels 1 and 2 as in \eqref{eq:typeA} and with $\sum_{i\in\partial_t} \alpha_{i}<1$; and
$$Y_t\mid\bW=\bw\simind\bin\left(\sum_{i\in\partial_t} w_{i},\alpha_t\right),$$
for a type B construction with levels 1 and 3 as in \eqref{eq:typeB}. Again, only in type A construction we obtain $X_t\sim\po(\mu)$ marginally for all $t\in\SC$.

Furthermore, combinations of any temporal with spatial dependences are also possible by an appropriate definition of the sums.

\section{Concluding remarks}
\label{sec:conclude}

We have introduced two novel ways of defining dependence, in space and time, among Poisson random variables. Our proposal relies on the use of latent variables in a three levels hierarchical model. Both constructions have shown a good performance when modelling real datasets, with an advantage for type A model over type B model, for the specific maternal mortality dataset analysed here. Additionally, our models outperformed the most commonly used INAR(1) and INGARCH(1,1) models in the maternal mortality dataset. 

When using our proposals for modelling purposes, one has to be aware of their different features. Type B construction induces a correlation, given in Proposition \ref{prop:corB}, that only depends on two parameters. On the other hand, type A construction induces a more flexible autocorrelation, see Proposition \ref{prop:corA}, in the sense that it could be based on several more parameters. 

A straightforward generalisation of our proposal is to define stochastic processes with negative binomial marginal distributions. Considering that \citep[e.g.][p.141]{nieto&bandyopadhyay:13}, if $X\mid Z=z\sim\po(z)$ and $Z\sim\ga(a,b)$, then $X\sim\pg(a,b,1)$, that is, a Poisson-gamma distribution with mean $a/b$. Furthermore, if $a$ is an integer $\pg(a,b,1)\equiv\nbi(a,b/(b+1))$, that is, a negative binomial distribution with number of successes $a$ and probability of success $b/(b+1)$. Therefore if for type A construction \eqref{eq:typeA} or for type B construction \eqref{eq:typeB}, we assume that the level 3 equations are given conditionally on $\mu$, and if we further take $\mu\sim\ga(r,\pi/(1-\pi))$ then $X_t\sim\nbi(r,\pi)$ marginally for $t\in\NN$. Studying the performance of the negative binomial processes is left to study in a future work. 

Finally, our constructions are flexible enough to be used in different contexts.  For the maternal mortality dataset, our models were used as sampling models to describe the law of the data. However, they can also be used as prior distributions for discrete functional integer-valued parameters, in a Bayesian nonparametric analysis.

\section*{Appendix}
Full conditional distributions for model parameters $\btheta$ and latent variables $(\bY,\bW)$ to perform posterior inference for type A and type B models. For simplicity we assume that $Y_t=0$, $W_t=0$ and $\alpha_t=0$ for $t\leq0$. 
In the sequel, we use $I_{\XX}(x)$ to denote the indicator function that takes the value of one if $x\in\XX$ and zero otherwise.

For type A model, the required full conditional distributions are:
\begin{enumerate}
\item[i)] For $Y_t$, $t=1,\ldots,T$
$$f(y_t\mid\rest)\propto\frac{\left[\alpha_t\mu^{-p}\left\{\prod_{j=0}^p\left(1-\sum_{i=0}^p\alpha_{t+j-i}\right)\right\}^{-1}\right]^{y_t}}{y_t!\prod_{j=0}^p\left(x_{t+j}-\sum_{i=0}^p y_{t+j-i}\right)!}I_{\{0,\ldots,c_t\}}(y_t),$$
with $c_t=\min_{j=0,\ldots,p}\{x_{t+j}-\sum_{i=0,i\neq j}^p y_{t+j-i}\}$
\item[ii)] For $\alpha_t$, $t=1,\ldots,T$
$$f(\alpha_t\mid\rest)\propto\alpha_t^{a_\alpha+y_t-1}(1-\alpha_t)^{b_\alpha-1}e^{p\mu\alpha_t}\prod_{j=0}^p\left(1-\sum_{i=0}^p\alpha_{t+j-i}\right)^{x_{t+j}-\sum_{i=0}^p y_{t+j-i}}I_{(0,d_t)}(\alpha_t)$$
where $d_t=\min_{j=0,\ldots,p}\left\{1-\sum_{i=0,i\neq j}^p\alpha_{t+j-i}\right\}$
\item[iii)] For $\mu$
$$f(\mu\mid\rest)=\ga\left(\mu\left|a_\mu+\sum_{t=1}^T x_t-\sum_{t=1}^T\sum_{i=1}^p y_{t-i},b_\mu+T+\sum_{t=1}^T\sum_{i=1}^p\alpha_{t-i}\right.\right)$$
\end{enumerate}

Since (i) is a discrete distribution with bounded support, we simply evaluate at all points of the support and normalize to obtain the probability density and sample a new $y_t^{(l)}$ at iteration $l$. To sample from (ii) we implement a MH step with random walk proposal distribution. If $\alpha_t^{(l)}$ is the current state of the chain, we sample from $\alpha_t^*\mid\alpha_t^{(l)}\sim\un(\max(0,\alpha_t^{(l)}-\delta_\alpha,\min(d_t,\alpha_t^{(l)}+\delta_\alpha)))$, that is a continuous uniform distribution, and accept it with probability $\min\{1,f(\alpha_t^*\mid\rest)/f(\alpha_t^{(l)}\mid\rest)\}$. Sampling from (iii) is direct since it has a standard form. 

For type B model, the required full conditional distributions are:
\begin{enumerate}
\item[iv)] For $Y_t$, $t=1,\ldots,T$
$$f(y_t\mid\rest)\propto\frac{\left\{\alpha_t\mu^{-1}(1-\alpha_t)^{-2}\right\}^{y_t}}{(x_t-y_t)!y_t!\left(\sum_{i=0}^p w_{t-i}-y_t\right)!}I_{\{0,\ldots,m_t\}}(y_t),$$
with $m_t=\min\{x_{t},\sum_{i=0}^p w_{t-i}\}$
\item[v)] For $W_t$, $t=1,\ldots,T$
$$f(w_t\mid\rest)\propto \left\{\prod_{j=0}^p {{\sum_{i=0}^p w_{t+j-i}}\choose{y_{t+j}}}\right\}\left\{\frac{\mu}{p+1}\prod_{j=0}^p(1-\alpha_{t+j})\right\}^{w_t}\frac{1}{w_t!}I_{\{h_t,h_t+1\ldots,\}}(w_t),$$
where $h_t=\max_{j=0,\ldots,p}\{y_{t+j}-\sum_{i=0,i\neq j}^p w_{t+j-i}\}$
\item[vi)] For $\alpha_t$, $t=1,\ldots,T$
$$f(\alpha_t\mid\rest)\propto\alpha_t^{a_\alpha+y_t-1}(1-\alpha_t)^{b_\alpha+x_t+\sum_{i=0}^p w_{t-i}-2y_t-1}e^{\mu\alpha_t}I_{(0,1)}(\alpha_t)$$
\item[vii)] For $\mu$
$$f(\mu\mid\rest)=\ga\left(\mu\left|a_\mu+\sum_{t=1}^T (x_t+w_t-y_t),b_\mu+T\left(\frac{p+2}{p+1}\right)-\sum_{t=1}^T\alpha_{t}\right.\right)$$
\end{enumerate}

Again, since (iv) is a discrete distribution with bounded support, we proceed as for (i). To sample from (v) we note that the support is discrete but unbounded, so we implement a MH step with random walk proposal of the form $W_t^*\mid W_t^{(l)}=w_t^{(l)}\sim\un(\max(h_t,w_t^{(l)}-\delta_w),w_t^{(l)}+\delta_w)$ and accept it with probability $\min\{1,f(w_t^*\mid\rest)/f(w_t^{(l)}\mid\rest)\}$. To sample from (vi) we proceed as for (ii) but with proposal $\alpha_t^*\mid\alpha_t^{(l)}\sim\un(\max(0,\alpha_t^{(l)}-\delta_\alpha,\min(1,\alpha_t^{(l)}+\delta_\alpha)))$. Finally, sampling from (vii) is direct. In all cases, $\delta_\alpha$ and $\delta_w$ are tuning parameters that control the acceptance probability.

\section*{Acknowledgements}

The author acknowledges support from \textit{Asociaci\'on Mexicana de Cultura, A.C.}

\bibliographystyle{natbib}

\newpage

\begin{table}
\centering
{\tiny
\begin{tabular}{l|ccccccc|ccccccc|cc}
\toprule
State / & \multicolumn{7}{c|}{Type A} & \multicolumn{7}{c|}{Type B} & AR & GA- \\
\hspace{1.1cm}$p$	&	0	&	1	&	2	&	3	&	4	&	5	&	6	&	0	&	1	&	2	&	3	&	4	&	5	&	6	&	&	RCH\\
\midrule
{\tiny Aguascal.}	&	{\bf 9.3}	&	11.3	&	13.1	&	13.5	&	11.3	&	10.8	&	11.1	&	{\bf 12.2}	&	12.5	&	12.9	&	13.7	&	13.8	&	12.7	&	13.9	&	4.1	&	19.8\\
{\tiny Baja.Calif.}	&	29.2	&	15.4	&	11.2	&	14.3	&	{\bf 10.0}	&	14.2	&	12.3	&	41.5	&	21.8	&	20.5	&	21.4	&	22.5	&	{\bf 17.9}	&	19.8	&	14.2	&	38.3\\
{\tiny Baja.Cal.S.}	&	3.3	&	3.0	&	2.7	&	2.6	&	3.0	&	2.9	&	{\bf 2.2}	&	3.3	&	3.3	&	{\bf 3.2}	&	3.7	&	3.9	&	3.3	&	3.5	&	1.3	&	5.5\\
{\tiny Campeche}	&	5.1	&	6.4	&	5.8	&	4.0	&	{\bf 2.6}	&	3.1	&	5.6	&	6.2	&	6.5	&	6.9	&	7.0	&	7.2	&	6.1	&	{\bf 5.8}	&	3.8	&	10.4\\
{\tiny Coahuila}	&	21.6	&	27.9	&	12.0	&	13.2	&	{\bf 8.8}	&	11.7	&	9.4	&	28.1	&	18.5	&	17.2	&	17.8	&	17.0	&	{\bf 15.2}	&	15.8	&	9.7	&	27.1\\
{\tiny Colima}	&	3.2	&	{\bf 2.8}	&	3.5	&	2.9	&	3.7	&	3.6	&	3.5	&	3.4	&	3.4	&	3.7	&	3.9	&	{\bf 2.8}	&	3.5	&	3.8	&	2.1	&	5.6\\
{\tiny Chiapas}	&	19.0	&	22.7	&	11.6	&	14.2	&	{\bf 9.3}	&	15.6	&	15.6	&	55.5	&	29.6	&	24.6	&	25.7	&	{\bf 20.3}	&	22.4	&	22.4	&	28.9	&	38.1\\
{\tiny Chihuahua}	&	19.1	&	20.2	&	14.5	&	11.9	&	8.1	&	10.1	&	{\bf 6.8}	&	26.0	&	{\bf 15.3}	&	18.3	&	15.3	&	17.3	&	17.1	&	17.3	&	14.1	&	31.7\\
{\tiny CDMX}	&	25.1	&	17.5	&	11.7	&	11.5	&	{\bf 6.6}	&	12.9	&	13.2	&	78.8	&	43.8	&	33.0	&	{\bf 20.6}	&	24.3	&	29.5	&	25.2	&	38.5	&	49.5\\
{\tiny Durango}	&	15.6	&	20.3	&	16.7	&	16.4	&	15.5	&	{\bf 13.3}	&	15.0	&	19.5	&	17.4	&	18.3	&	{\bf 16.0}	&	16.8	&	18.0	&	16.2	&	10.4	&	27.6\\
{\tiny Guanajuato}	&	32.1	&	42.8	&	{\bf 10.4}	&	30.0	&	16.9	&	26.0	&	21.9	&	61.8	&	34.8	&	{\bf 27.7}	&	32.4	&	29.5	&	29.2	&	34.1	&	46.0	&	43.5\\
{\tiny Guerrero}	&	19.7	&	18.1	&	10.0	&	7.4	&	4.3	&	9.6	&	{\bf 5.8}	&	37.7	&	15.4	&	18.2	&	17.0	&	{\bf 14.9}	&	16.4	&	17.5	&	16.7	&	36.4\\
{\tiny Hidalgo}	&	32.8	&	37.4	&	{\bf 10.0}	&	24.1	&	20.8	&	23.9	&	21.5	&	71.4	&	42.2	&	{\bf 32.5}	&	36.4	&	34.9	&	37.2	&	35.8	&	40.0	&	51.0\\
{\tiny Jalisco}	&	15.5	&	27.6	&	22.8	&	12.8	&	{\bf 9.8}	&	24.0	&	19.4	&	58.1	&	27.4	&	24.8	&	27.2	&	26.1	&	26.7	&	{\bf 23.6}	&	24.9	&	41.7\\
{\tiny Mexico}	&	22.7	&	10.6	&	9.7	&	{\bf 5.7}	&	7.2	&	7.1	&	8.0	&	56.9	&	27.0	&	21.9	&	22.0	&	19.7	&	{\bf 16.7}	&	20.6	&	30.0	&	38.1\\
{\tiny Michoacan}	&	16.9	&	20.5	&	{\bf 8.8}	&	9.8	&	10.1	&	9.2	&	13.9	&	36.2	&	21.0	&	19.7	&	20.6	&	{\bf 18.9}	&	21.4	&	21.3	&	16.3	&	34.5\\
{\tiny Morelos}	&	19.9	&	22.3	&	13.3	&	14.8	&	{\bf 12.4}	&	16.3	&	13.2	&	27.3	&	21.9	&	20.8	&	{\bf 16.4}	&	20.1	&	17.0	&	18.9	&	13.0	&	36.5\\
{\tiny Nayarit}	&	10.7	&	12.1	&	9.9	&	{\bf 6.7}	&	8.5	&	7.9	&	8.9	&	10.3	&	10.9	&	11.6	&	10.5	&	{\bf 9.5}	&	10.9	&	10.5	&	5.3	&	16.8\\
{\tiny Nuevo.Leon}	&	32.0	&	20.0	&	18.7	&	{\bf 11.4}	&	16.9	&	37.0	&	26.6	&	66.8	&	40.9	&	38.1	&	{\bf 22.8}	&	27.6	&	30.4	&	33.0	&	17.3	&	49.0\\
{\tiny Oaxaca}	&	24.6	&	15.4	&	16.7	&	{\bf 12.8}	&	23.0	&	19.9	&	14.9	&	66.5	&	44.3	&	28.9	&	30.1	&	{\bf 28.7}	&	30.8	&	32.6	&	41.2	&	49.2\\
{\tiny Puebla}	&	54.2	&	26.1	&	{\bf 9.4}	&	18.7	&	15.9	&	9.5	&	15.2	&	83.7	&	45.0	&	28.6	&	{\bf 28.3}	&	30.3	&	30.6	&	38.7	&	45.2	&	51.6\\
{\tiny Queretaro}	&	30.1	&	17.3	&	17.9	&	{\bf 10.7}	&	15.8	&	14.3	&	32.8	&	39.9	&	23.5	&	26.2	&	{\bf 21.2}	&	25.0	&	28.1	&	29.1	&	18.5	&	45.0\\
{\tiny Quintana.R.}	&	9.2	&	9.2	&	5.9	&	{\bf 4.6}	&	7.6	&	5.5	&	6.2	&	8.4	&	{\bf 8.3}	&	8.8	&	8.6	&	10.3	&	10.2	&	10.8	&	4.8	& 15.7\\
{\tiny San.Luis.P.}	&	50.4	&	29.1	&	14.2	&	{\bf 13.4}	&	16.5	&	18.0	&	25.0	&	97.8	&	60.1	&	47.1	&	42.8	&	47.2	&	47.0	&	{\bf 41.4}	&	40.6	&	71.1\\
{\tiny Sinaloa}	&	19.6	&	21.5	&	12.6	&	17.0	&	{\bf 11.5}	&	11.6	&	12.8	&	23.2	&	19.0	&	18.3	&	16.8	&	18.7	&	{\bf 16.2}	&	21.5	&	9.4	&	31.7\\
{\tiny Sonora}	&	13.6	&	18.0	&	6.9	&	7.1	&	13.5	&	{\bf 6.4}	&	6.4	&	22.7	&	16.7	&	17.0	&	19.4	&	17.6	&	{\bf 15.4}	&	17.3	&	10.5	&	27.1\\
{\tiny Tabasco}	&	27.2	&	23.8	&	18.1	&	{\bf 9.3}	&	21.5	&	25.8	&	18.3	&	54.7	&	31.0	&	32.0	&	31.9	&	{\bf 26.2}	&	27.7	&	27.5	&	13.9	&	43.3\\
{\tiny Tamaulipas}	&	42.2	&	27.9	&	{\bf 14.5}	&	25.0	&	25.2	&	22.6	&	21.4	&	76.5	&	48.5	&	41.9	&	35.8	&	{\bf 24.3}	&	34.6	&	37.9	&	15.8	&	55.9\\
{\tiny Tlaxcala}	&	11.9	&	14.4	&	7.2	&	7.5	&	10.2	&	{\bf 6.2}	&	9.9	&	14.2	&	15.2	&	{\bf 10.1}	&	12.4	&	12.7	&	11.9	&	13.9	&	11.4	&	20.9\\
{\tiny Veracruz}	&	16.5	&	21.9	&	14.6	&	12.8	&	8.9	&	9.3	&	{\bf 7.6}	&	45.0	&	24.7	&	{\bf 18.1}	&	23.0	&	21.4	&	21.0	&	19.8	&	33.3	&	37.6\\
{\tiny Yucatan}	&	25.4	&	29.0	&	12.2	&	{\bf 9.2}	&	17.1	&	9.7	&	15.7	&	41.9	&	{\bf 16.7}	&	19.3	&	20.3	&	22.4	&	25.8	&	26.7	&	17.2	&	39.0\\
{\tiny Zacatecas}	&	12.1	&	15.3	&	8.4	&	{\bf 5.8}	&	6.6	&	7.0	&	7.7	&	{\bf 8.9}	&	11.7	&	12.9	&	13.9	&	13.9	&	14.8	&	15.4	&	9.6	&	21.4\\
\bottomrule
\end{tabular}}
\caption{L-measure \eqref{eq:lmeasure} with $\nu=1/2$ when fitting type A and B models for $p=0,...,6$ to maternal mortality data for the 32 states of Mexico. Smallest value, within each model type is shown in bold. L-measure for INAR(1) and INGARCH(1,1) models are also included in the last two columns.}
\label{tab:gof}
\end{table}

\begin{figure}
\centerline{\includegraphics[scale=0.27]{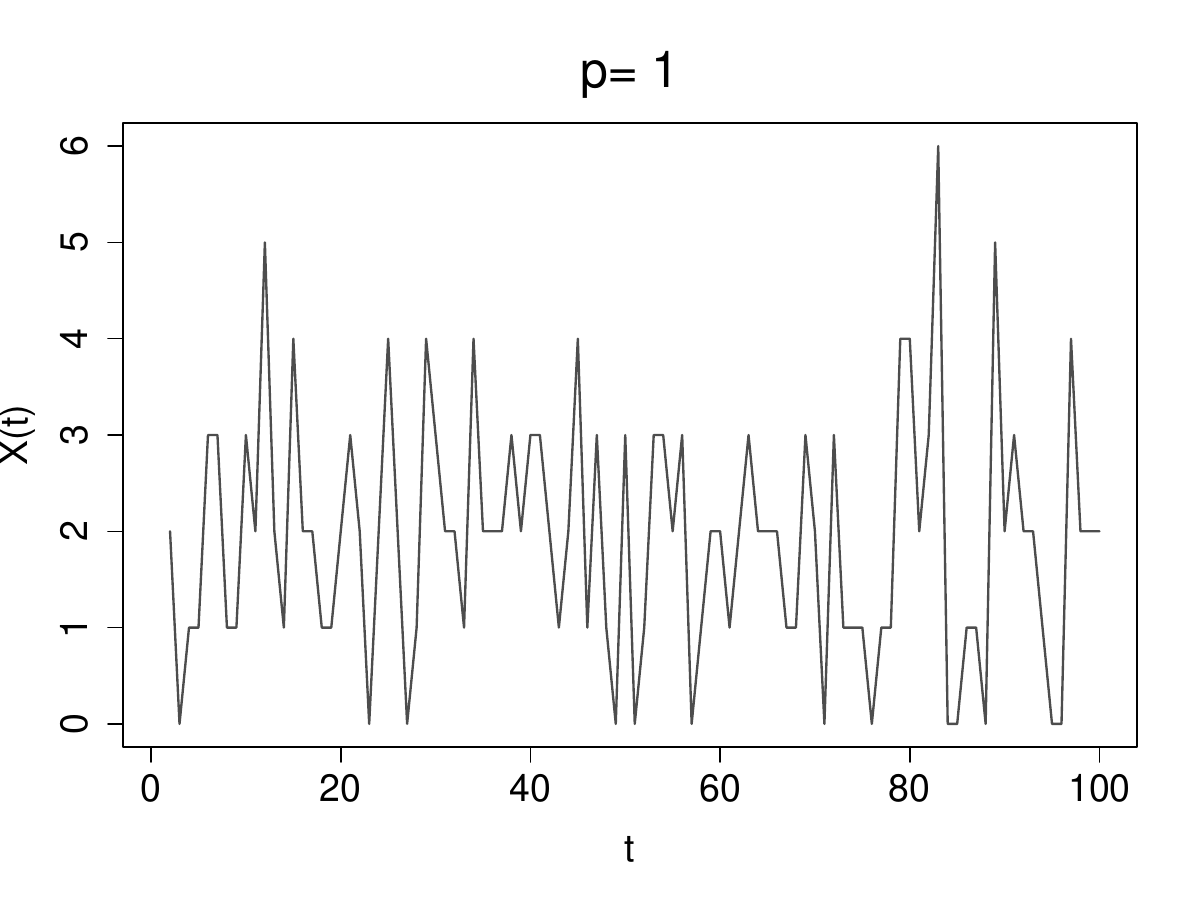}
\includegraphics[scale=0.27]{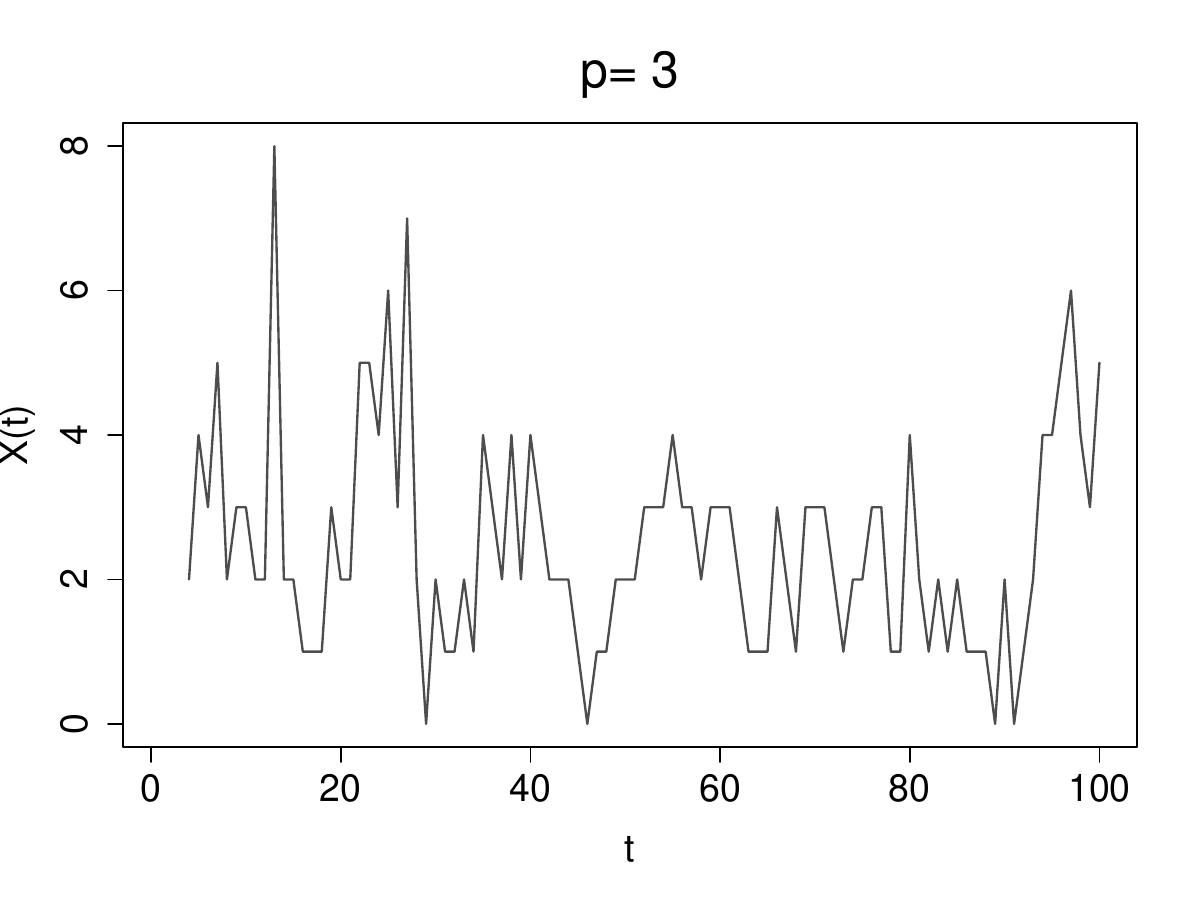}
\includegraphics[scale=0.27]{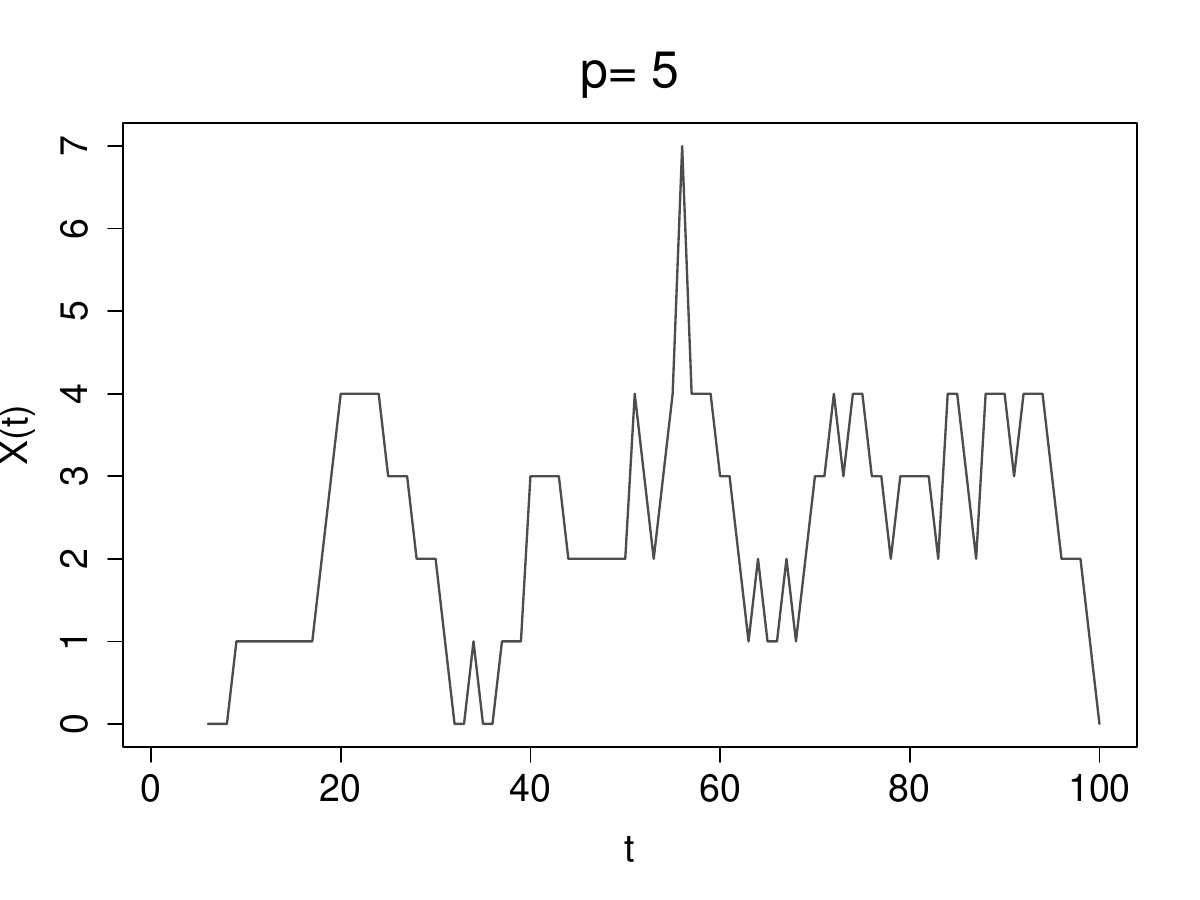}}
\centerline{\includegraphics[scale=0.27]{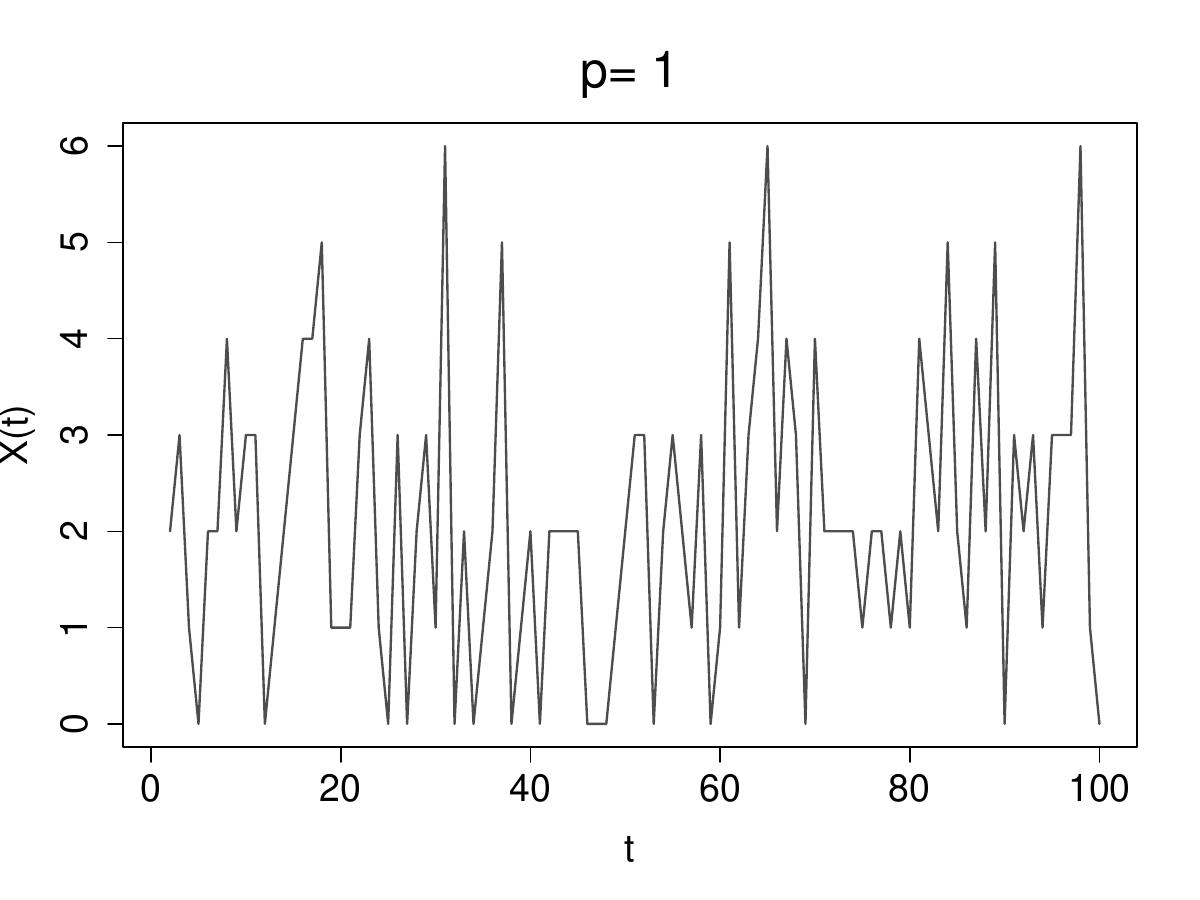}
\includegraphics[scale=0.27]{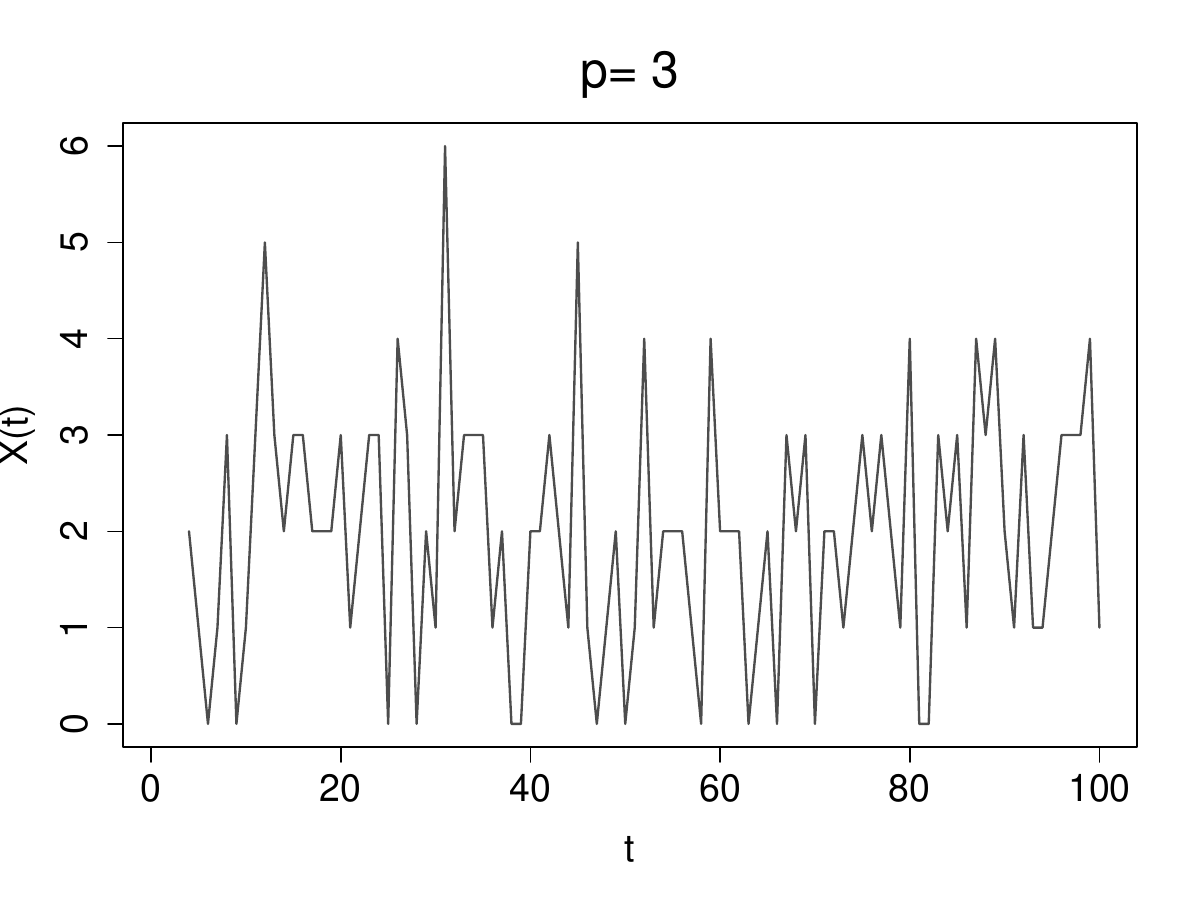}
\includegraphics[scale=0.27]{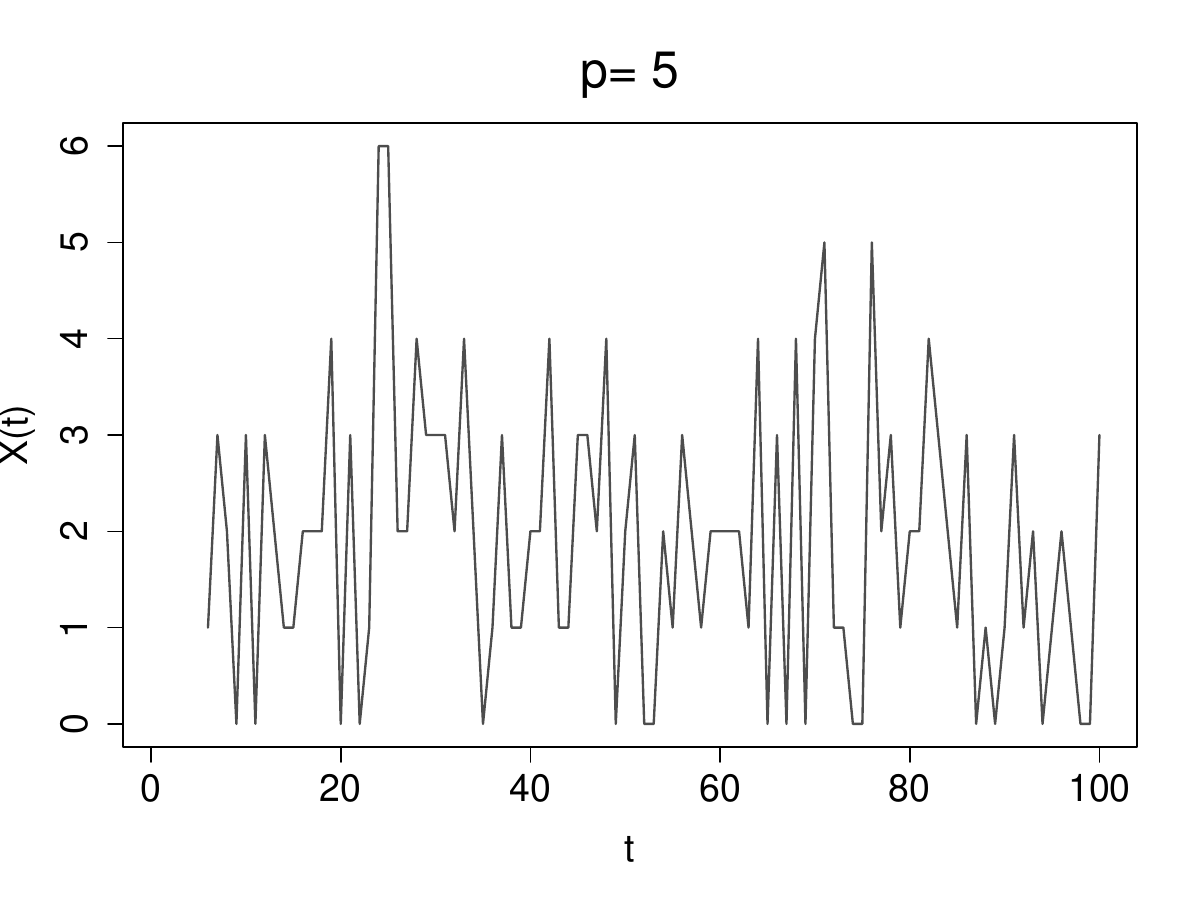}}
\vspace{-0mm}
\caption{{\small Simulated $\{X_t\}$ paths for $t=p+1,\ldots,100$. Across columns: $p=1$ (first), $p=3$ (middle) and $p=5$ (last). Across rows: type A (top) and type B (bottom).}}
\label{fig:sims}
\end{figure}

\begin{figure}
\centerline{\includegraphics[scale=0.8]{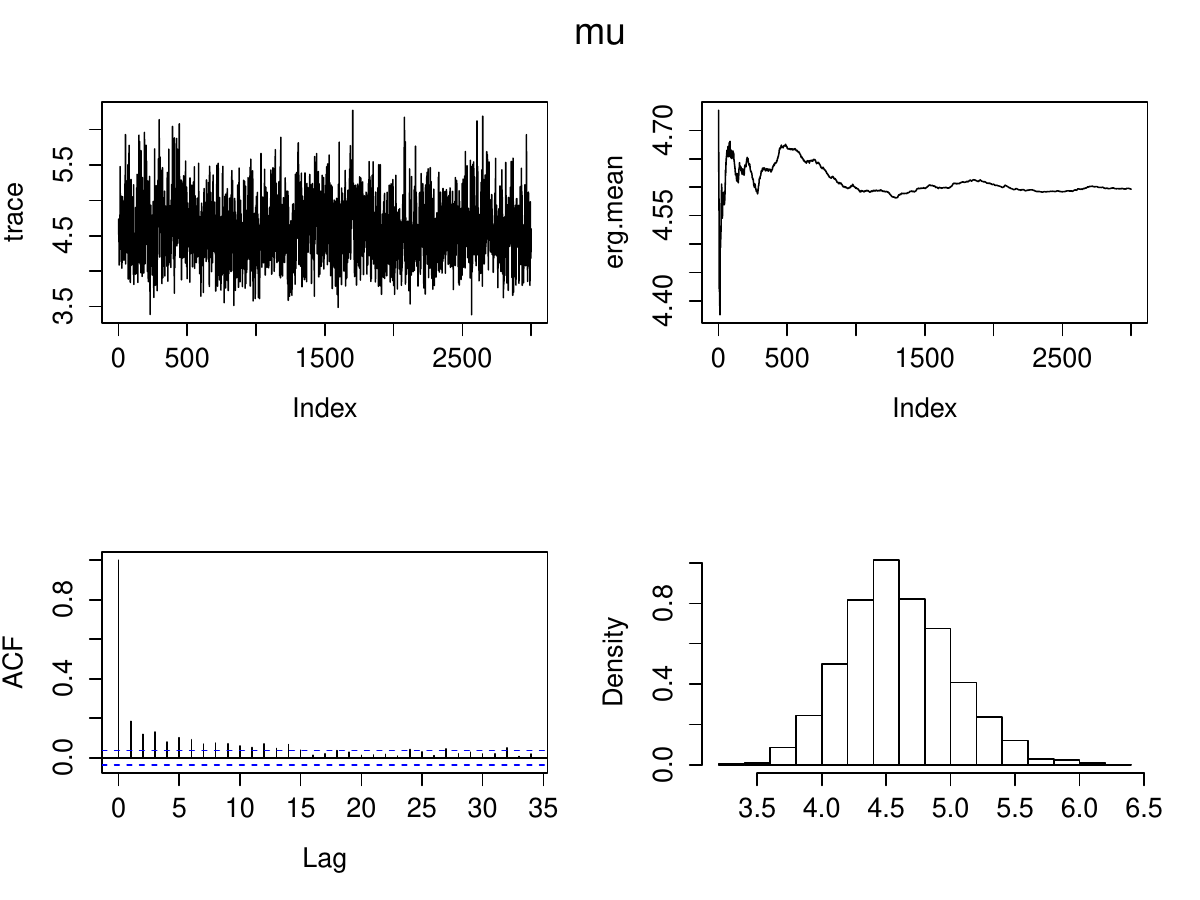}}
\vspace{-0mm}
\caption{{\small MCMC convergence diagnostics for $\mu$ in type A model for Coahuila state. Trace plot, ergodic means, autocorrelation function and probability histogram.}}
\label{fig:mcmc}
\end{figure}

\begin{figure}
\centerline{\includegraphics[scale=0.42]{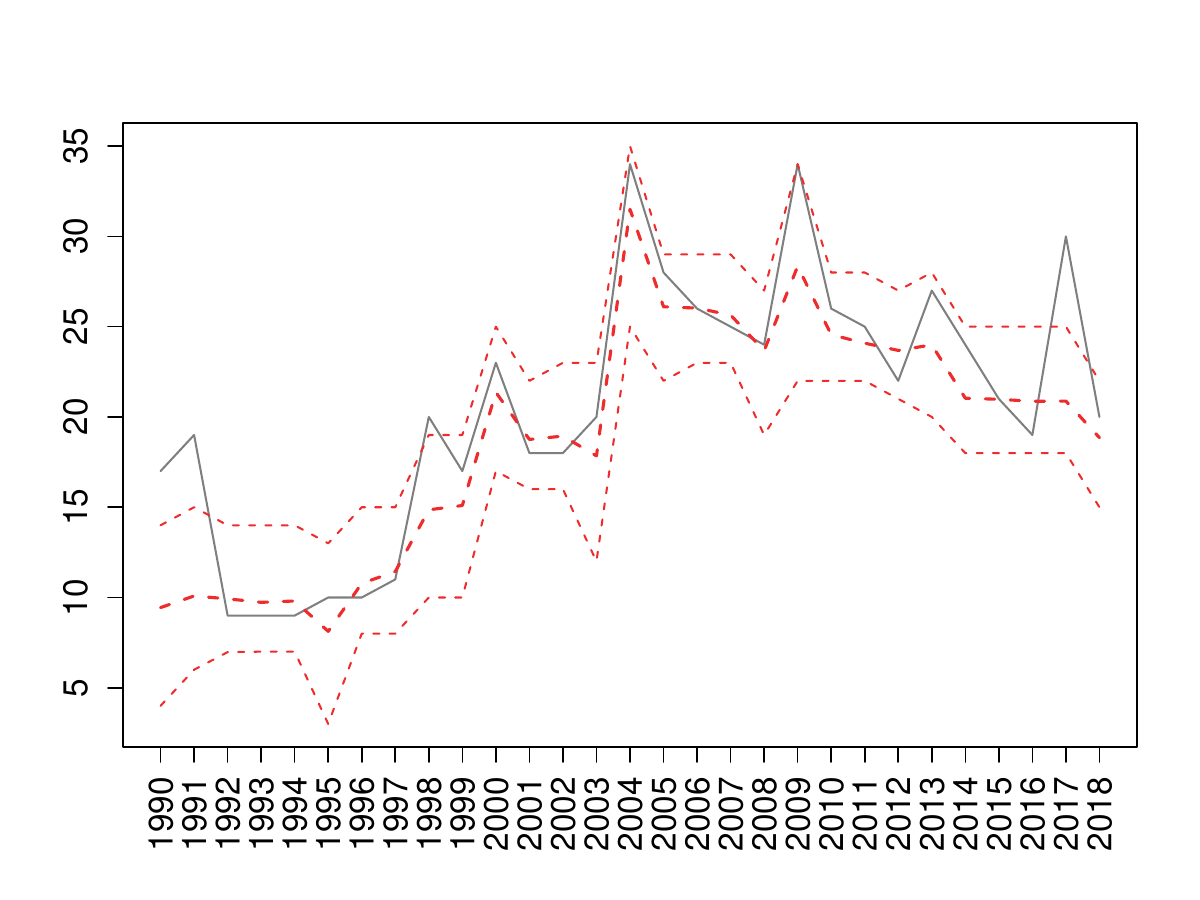}
\includegraphics[scale=0.42]{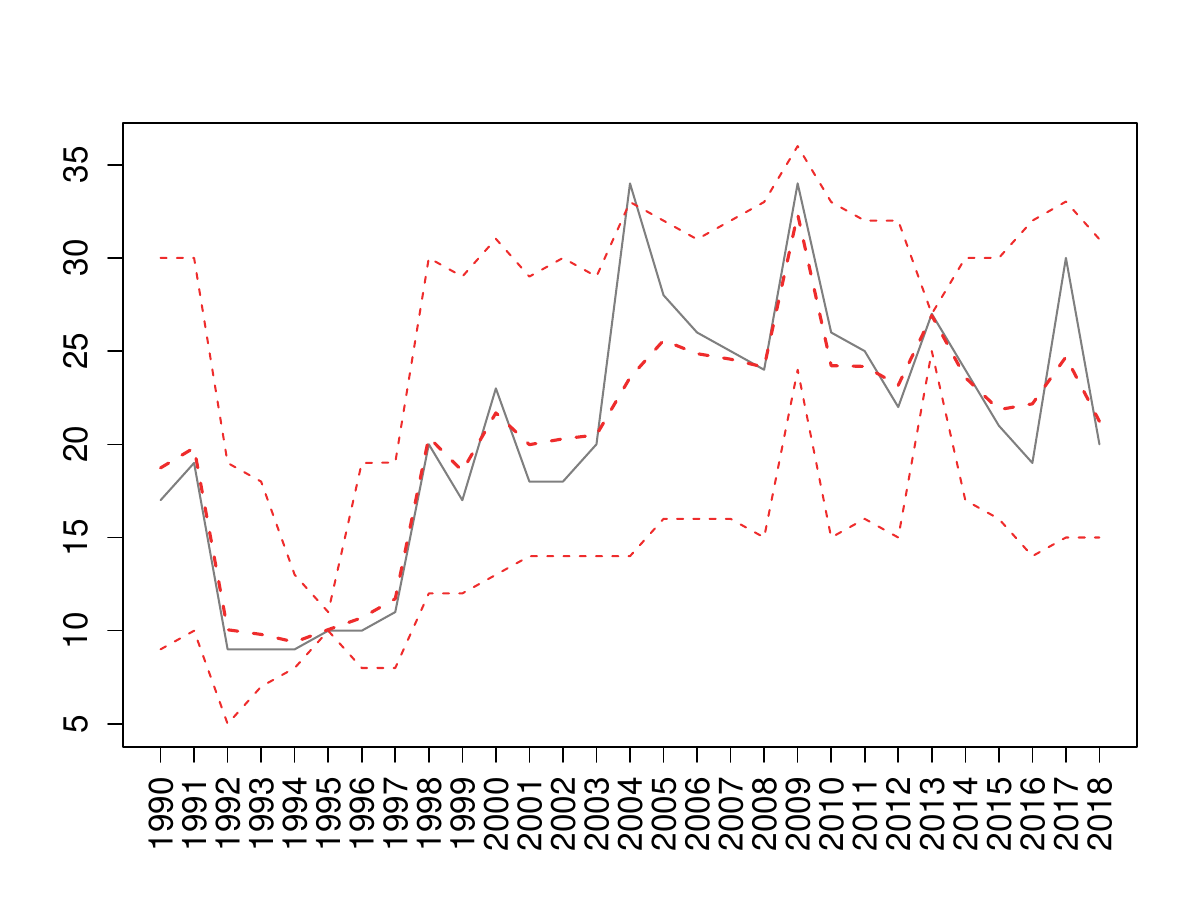}}
\vspace{-0mm}
\caption{{\small Maternal mortality for Baja California state. Best fitting models. Type A with $p=4$ (left) and type B with $p=5$ (right). Observed data (solid grey), point prediction (thick dotted red) and 95\% credible interval (dotted red).}}
\label{fig:baja}
\end{figure}

\begin{figure}
\centerline{\includegraphics[scale=0.42]{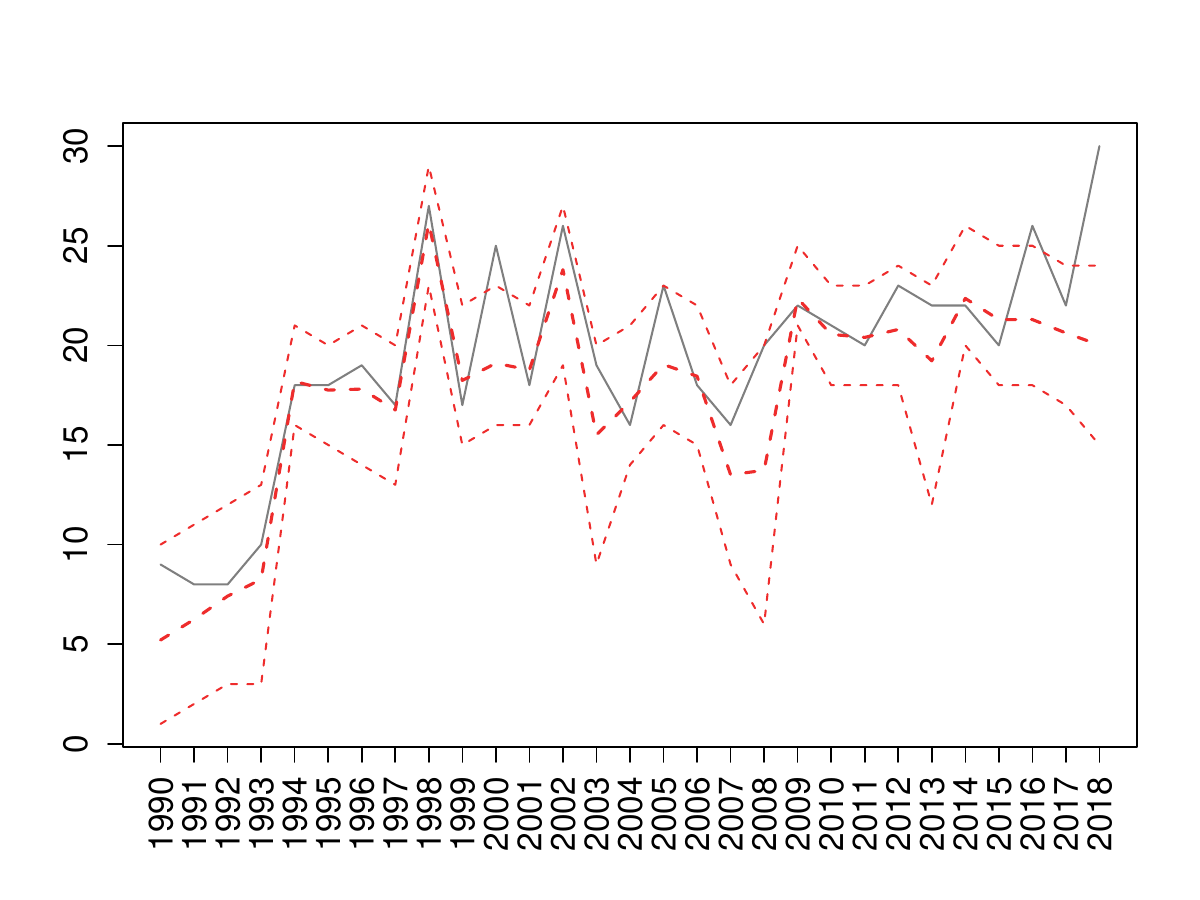}
\includegraphics[scale=0.42]{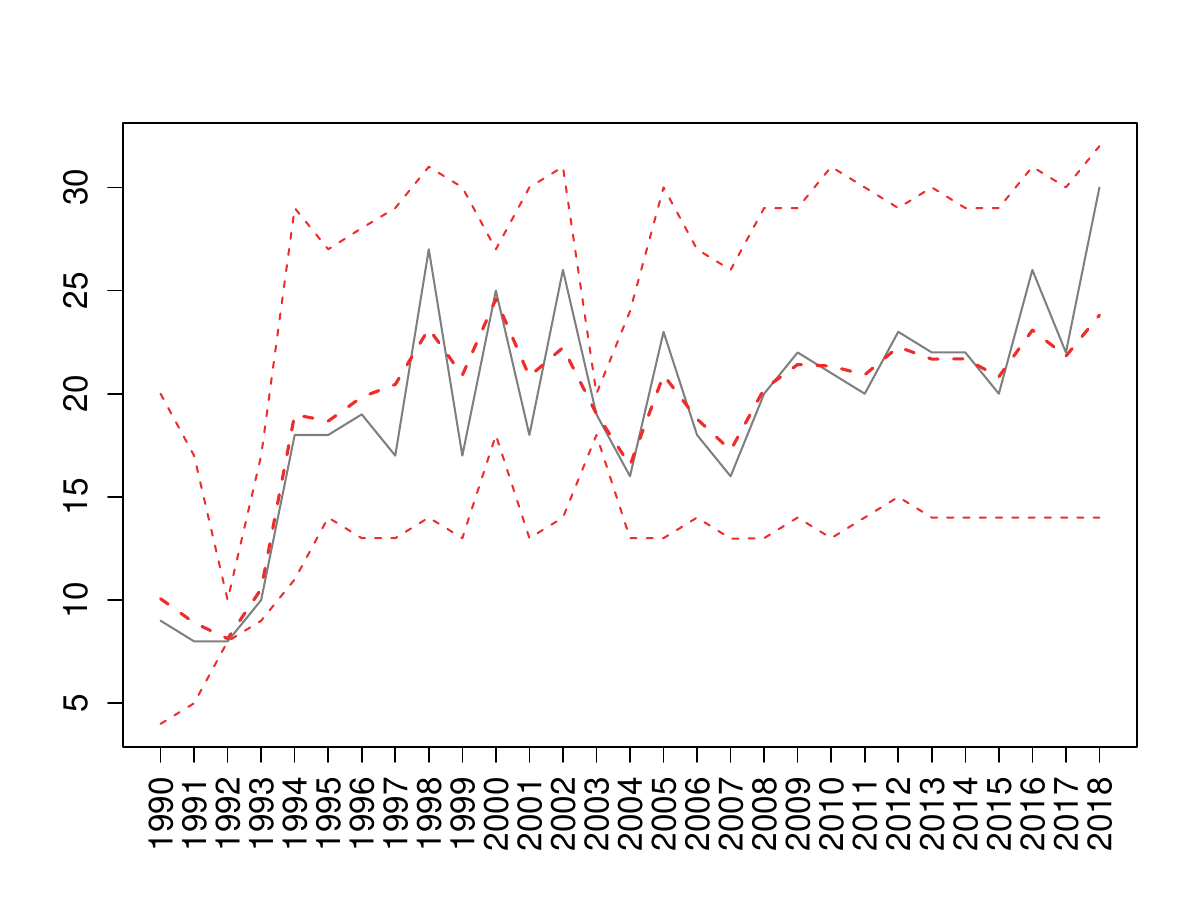}}
\vspace{-0mm}
\caption{{\small Maternal mortality for Coahuila state. Best fitting models. Type A with $p=4$ (left) and type B with $p=5$ (right). Observed data (solid grey), point prediction (thick dotted red) and 95\% credible interval (dotted red).}}
\label{fig:coahuila}
\end{figure}

\begin{figure}
\centerline{\includegraphics[scale=0.42]{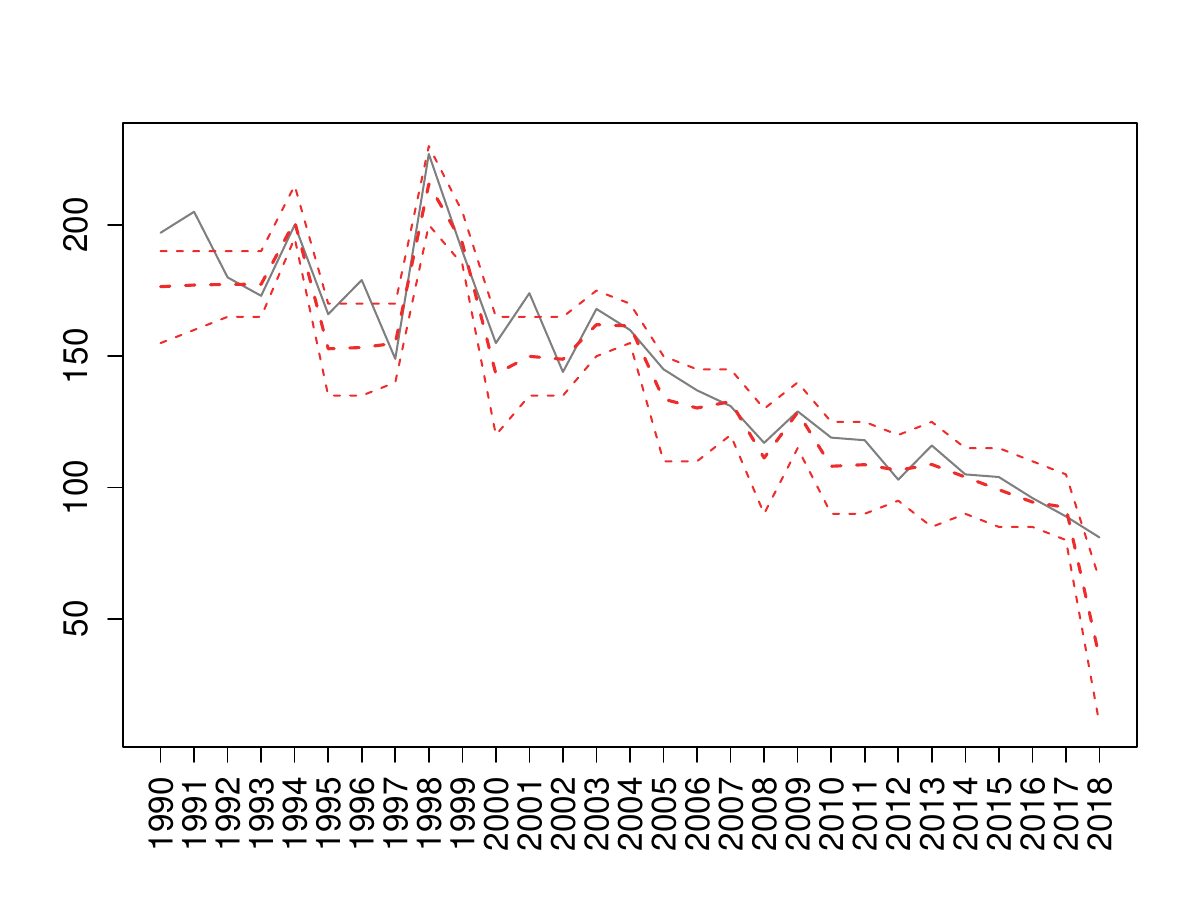}
\includegraphics[scale=0.42]{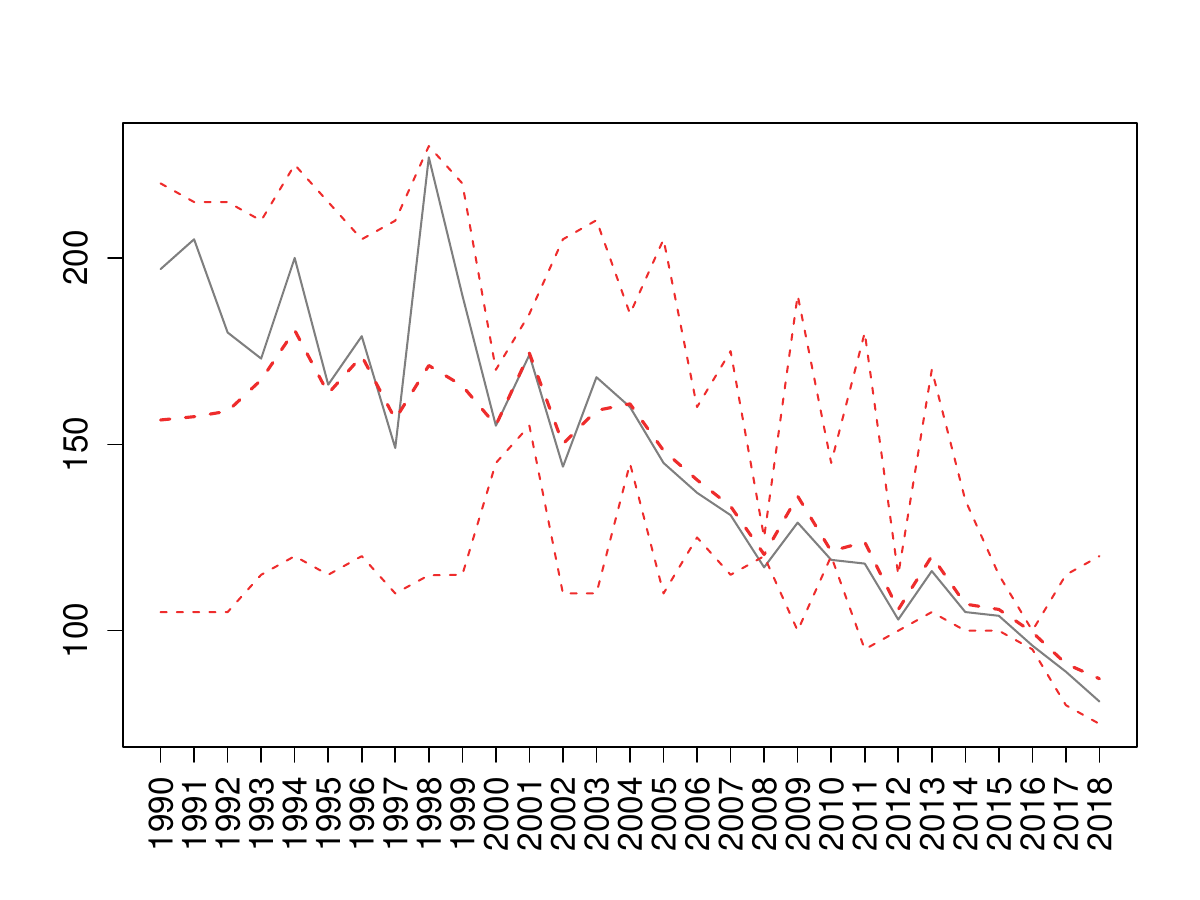}}
\vspace{-0mm}
\caption{{\small Maternal mortality for CDMX (Mexico City) state. Best fitting models. Type A with $p=4$ (left) and type B with $p=3$ (right). Observed data (solid grey), point prediction (thick dotted red) and 95\% credible interval (dotted red).}}
\label{fig:cdmx}
\end{figure}

\begin{figure}
\centerline{\includegraphics[scale=0.42]{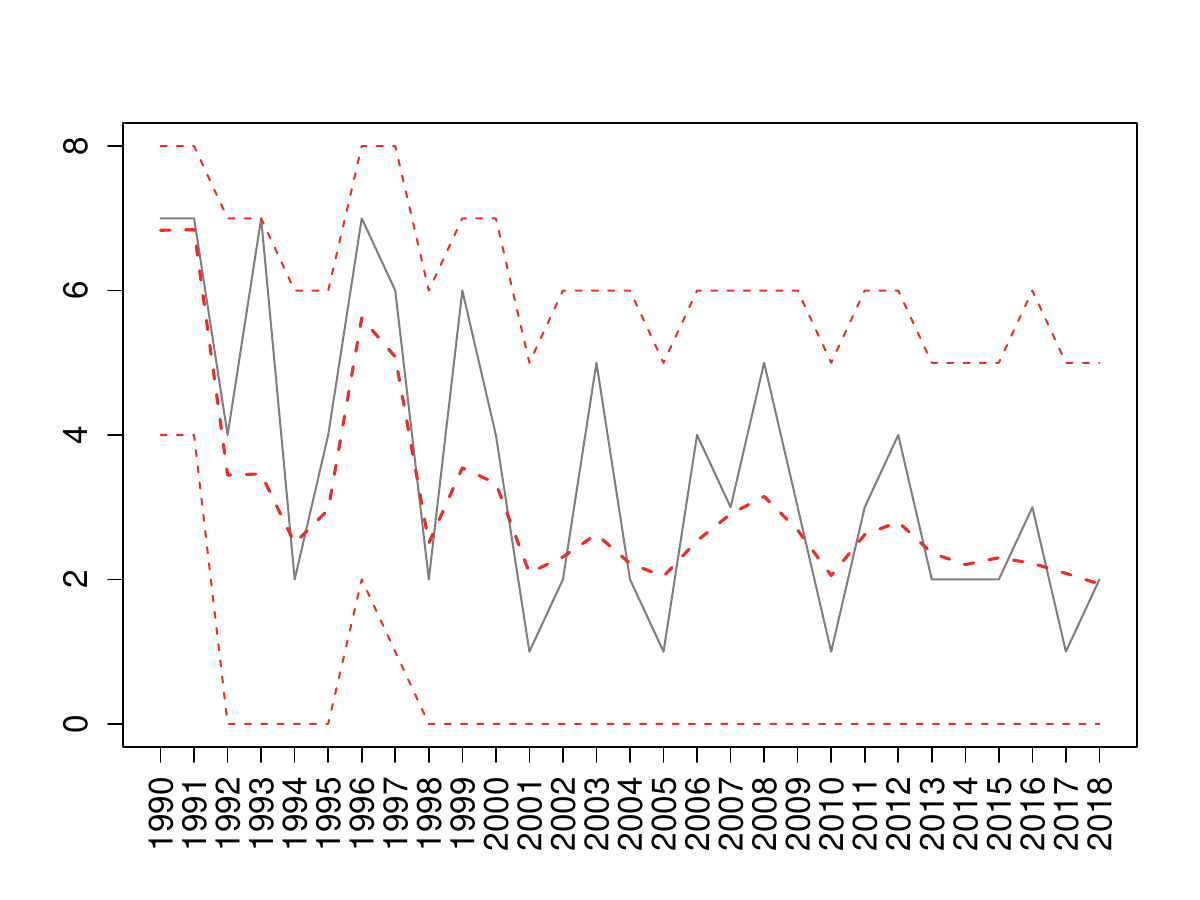}
\includegraphics[scale=0.42]{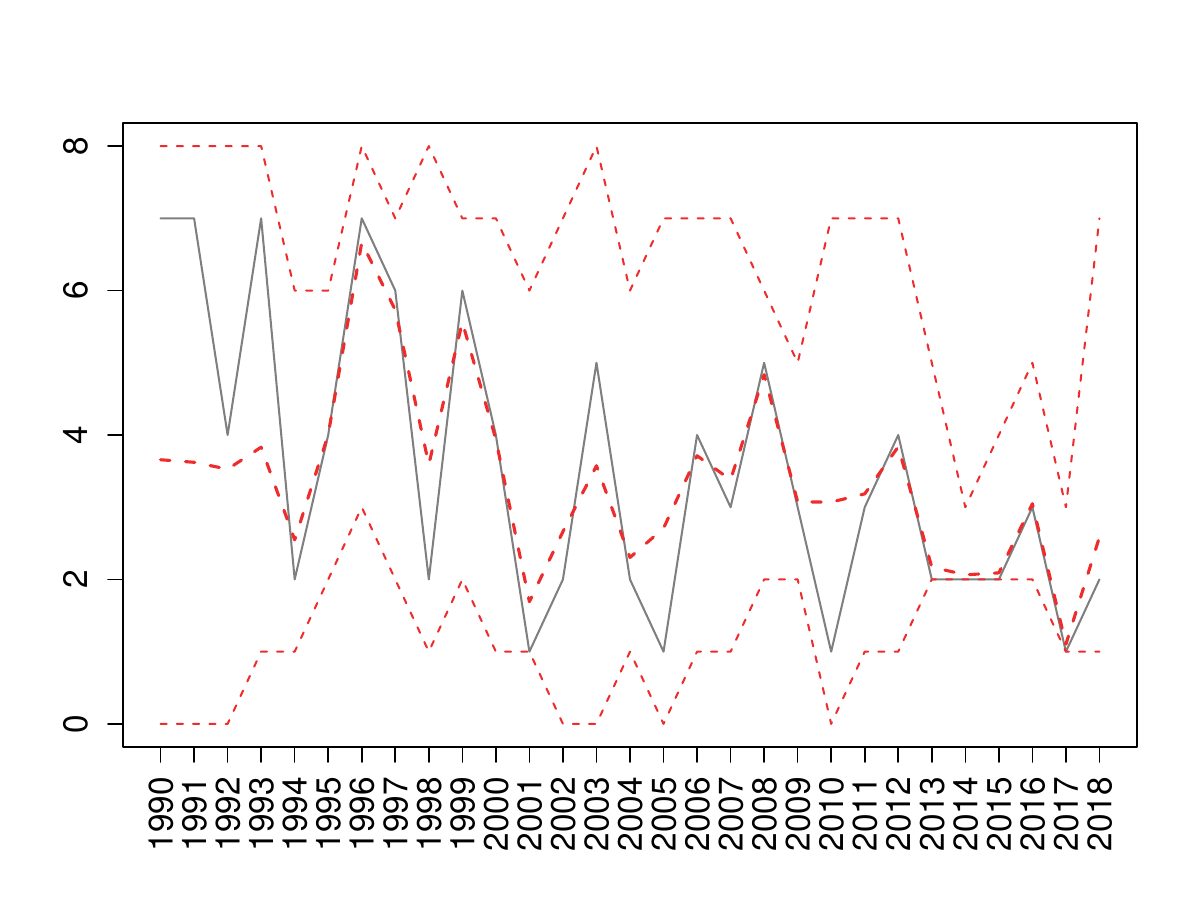}}
\vspace{-0mm}
\caption{{\small Maternal mortality for Colima state. Best fitting models. Type A with $p=1$ (left) and type B with $p=4$ (right). Observed data (solid grey), point prediction (thick dotted red) and 95\% credible interval (dotted red).}}
\label{fig:colima}
\end{figure}

\begin{figure}
\centerline{\includegraphics[scale=0.42]{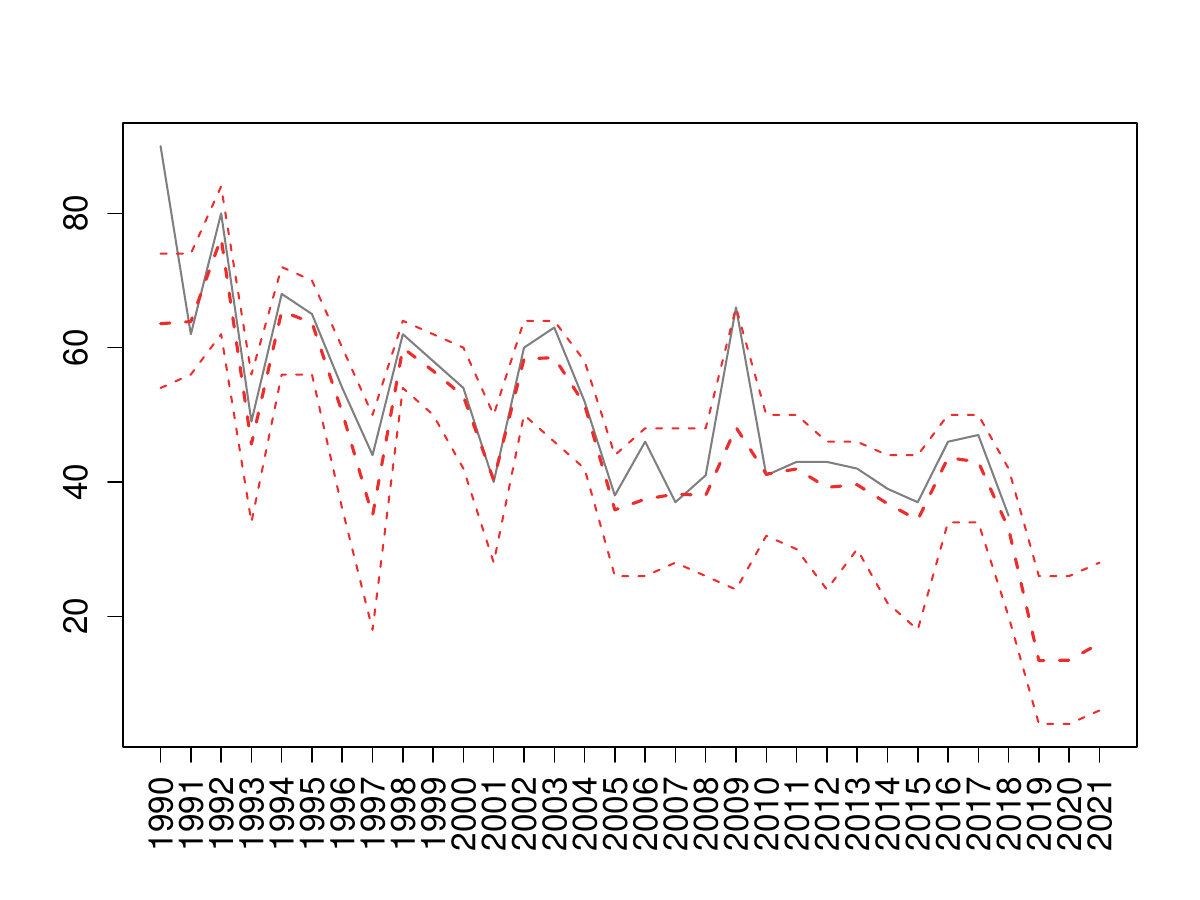}
\includegraphics[scale=0.42]{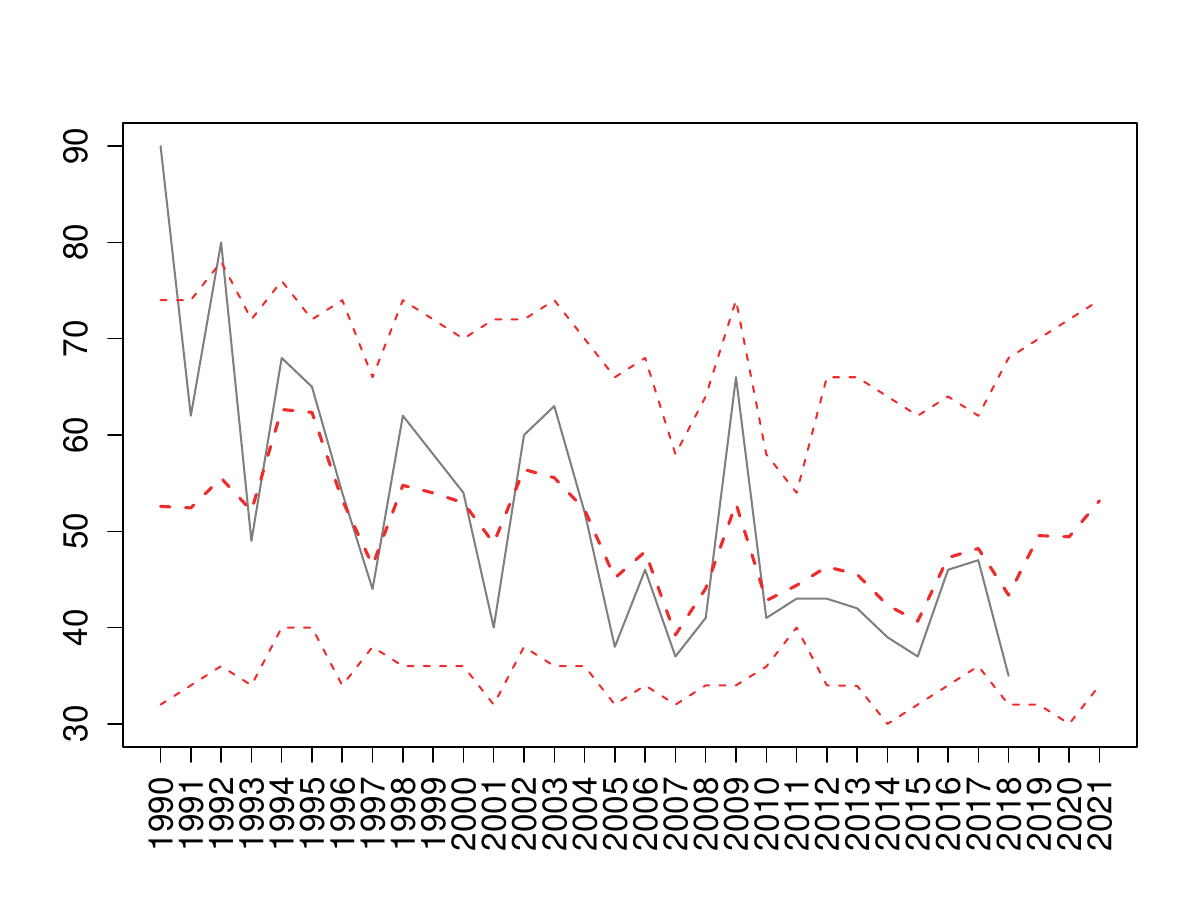}}
\centerline{\includegraphics[scale=0.42]{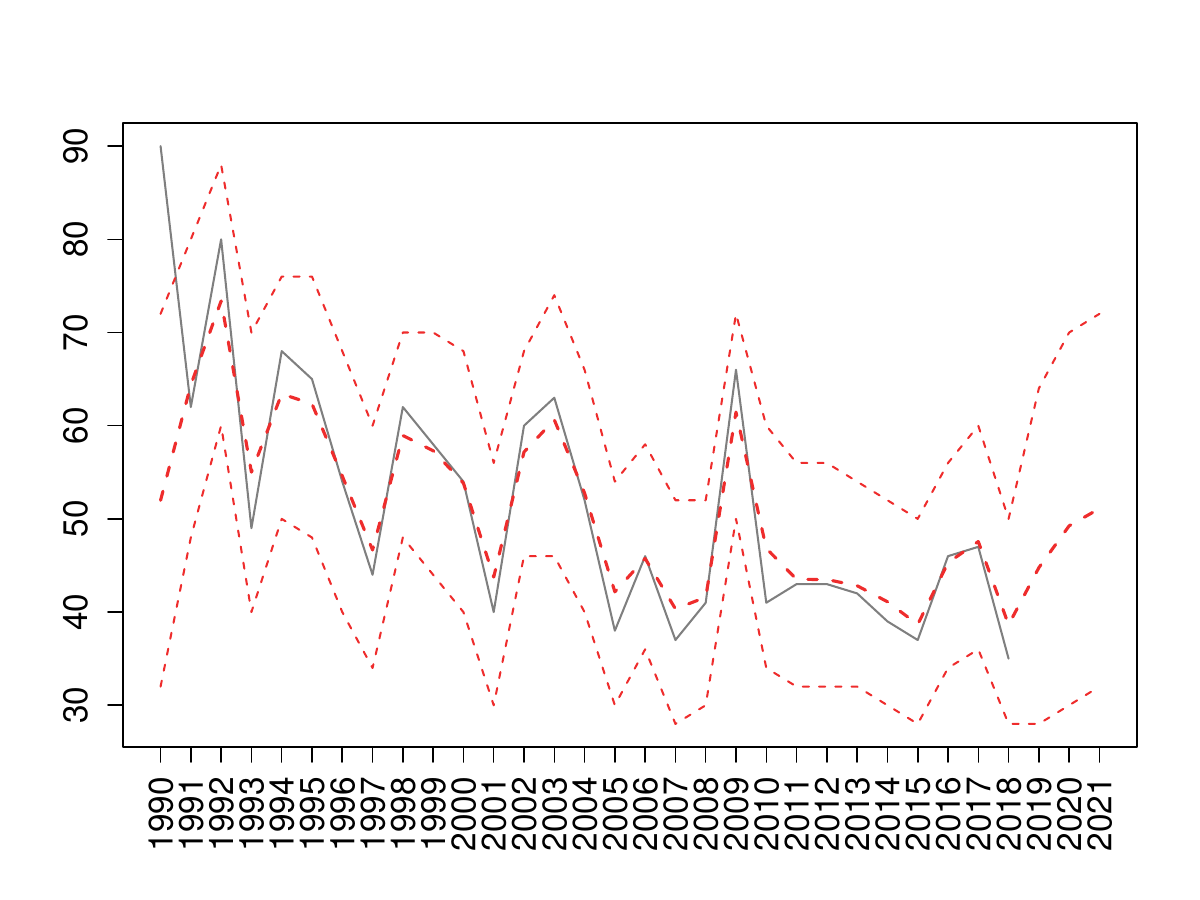}
\includegraphics[scale=0.42]{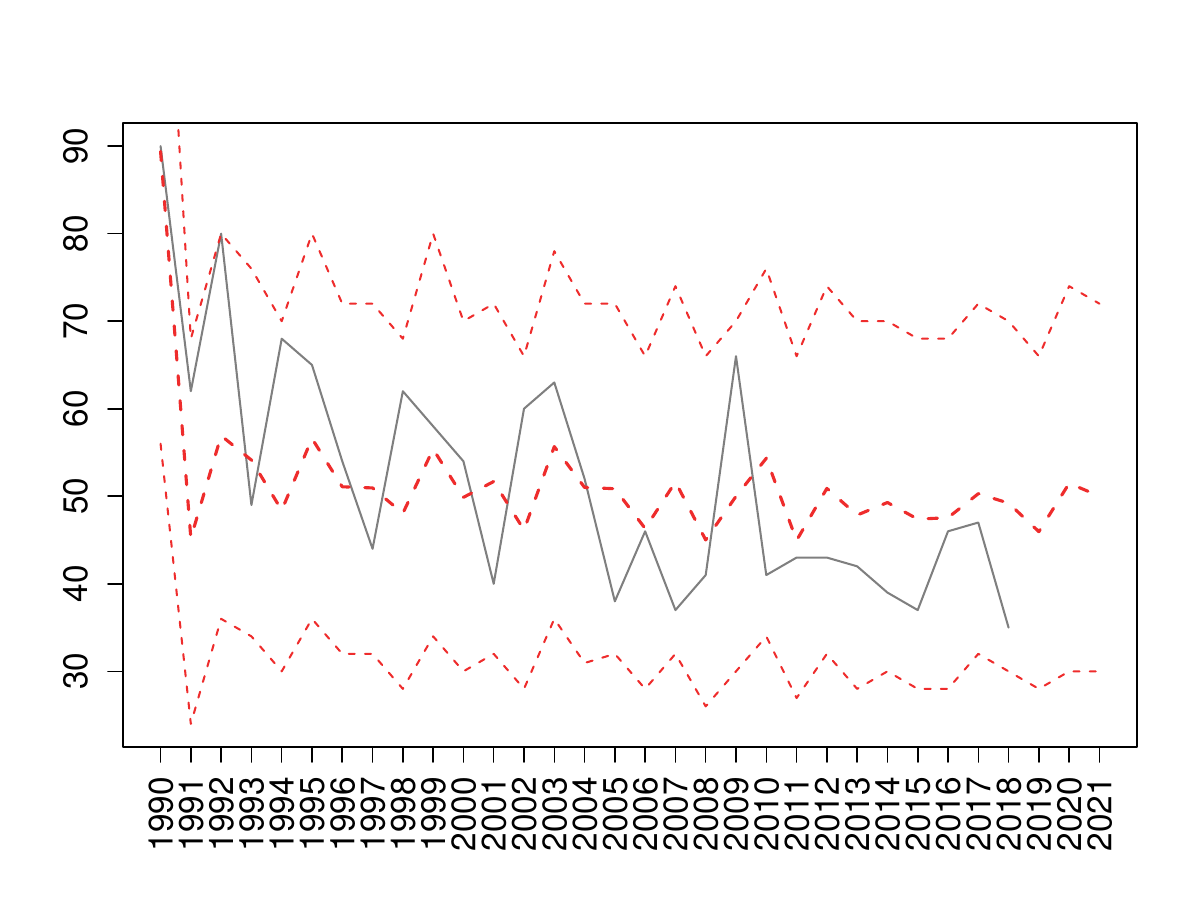}}
\vspace{-0mm}
\caption{{\small Maternal mortality for Guanajuato state. Best fitting models. Type A with $p=2$ (top left), type B with $p=2$ (top right), INAR(1) (bottom left) and INGARCH(1,1) bottom right. Observed data (solid grey), point prediction (thick dotted red) and 95\% credible interval (dotted red). All panels contain out of sample predictions for 3 years ahead.}}
\label{fig:gua}
\end{figure}

\end{document}